\documentclass[]{IEEEtran}
\usepackage{ifthen}
\usepackage{amsmath,amssymb,amsfonts,amsthm}
\usepackage[vlined,linesnumbered]{algorithm2e}
\usepackage{graphicx}
\usepackage{textcomp}
\usepackage[dvipsnames,rgb]{xcolor}
\usepackage{cancel}
\usepackage{bm}
\usepackage{hyperref}

\usepackage{tikz}
\usetikzlibrary{arrows, positioning, math, shadings, shapes.gates.logic.US, shapes.gates.logic.IEC, calc}

\usepackage{todonotes}

\usepackage{paralist}

\newcommand{\golden}[0]{C$^{\text{golden}}$}
\newcommand{\revised}[0]{C$^{\text{revised}}$}
\newcommand{\cequiv}{\mbox{\ensuremath{\bm{(\equiv)}}}}

\newcommand{\labelsec}[1]{\label{sec:#1}}
\newcommand{\refsec}[1]{Sect.~\ref{sec:#1}}
\newcommand{\refsecs}[2]{Sects.~\ref{sec:#1} and~\ref{sec:#2}}

\newtheorem{definition}{Definition}	
\newtheorem{lemma}{Lemma}
\newtheorem{theorem}{Theorem}
\newtheorem{example}{Example}

\newcommand{\labelfig}[1]{\label{fig:#1}}
\newcommand{\reffig}[1]{Fig.~\ref{fig:#1}}

\newcommand{\labelexample}[1]{\label{ex:#1}}
\newcommand{\refexample}[1]{Ex.~\ref{ex:#1}}

\newcommand{\labeltheorem}[1]{\label{thrm:#1}}
\newcommand{\reftheorem}[1]{Thrm.~\ref{thrm:#1}}

\newcommand{\labelalg}[1]{\label{alg:#1}}
\newcommand{\refalg}[1]{Alg.~\ref{alg:#1}}

\newcommand{\labeltab}[1]{\label{tab:#1}}
\newcommand{\reftab}[1]{Tab.~\ref{tab:#1}}

\newcommand{\Inv}[0]{\ensuremath\mathit{INV}}

\newcommand{\revisedff}[1]{\hat{#1}}

\newcommand{\mcl}[1]{\multicolumn{1}{c}{#1}}
\newcommand{\mclv}[1]{\multicolumn{1}{c|}{#1}}

\begin{document}

\title{Certified Sequential Sweep Without Unrolling}


\author{\IEEEauthorblockN{Tobias Seufert\hspace{5mm} Christoph Scholl}\\
	\IEEEauthorblockA{
		Department of Computer Science, University of Freiburg, Freiburg, Germany \hspace{5mm}\\
		$\{$seufert, scholl$\}$@cs.uni-freiburg.de\\
	}
}

\maketitle

\begin{abstract}
    Verifying retiming followed by additional sequential resynthesis steps remains challenging 
    for existing tools, limiting aggressive optimizations.
    Commonly-used equivalence checking tools rely on internal resynthesis operations and error-prone 
    orchestration of different model checkers, which makes certification difficult or even infeasible.
    We present an IC3-based technique that uses retiming as a preprocessing step and
    uses simulation to generate suspected invariants.
    Our technique efficiently verifies sequential 
    equivalence problems under retiming and arbitrarily strong sequential resynthesis,
    and has the additional feature of producing certificates all the same. 

    Our results on a selection of retimed and resynthesized open circuit designs show that 
    our rather simple approach vastly outperforms the whole portfolio of the winner of the latest Hardware Model Checking Competition as a representative of general-purpose certifying model checkers. 
    Compared to non-certifying approaches, like the mature equivalence checker of ABC, we are still
    competitive with additional complementary strengths.
\end{abstract}

\section{Introduction}
\labelsec{intro}

Retiming \cite{DBLP:journals/algorithmica/LeisersonS91} is among the most important logic optimization techniques used for
sequential circuits. By simply moving registers over combinational blocks, retiming enables reduction in either the delay
of the critical path or the required area of the design by reducing the number of required registers.

The verification of circuits which are optimized by \emph{only} retiming is likely to be simpler than general 
sequential equivalence checking. Technically, it is only necessary to establish the retiming invariant
 which represents the functional relation between the original and the revised design
after retiming---and prove it inductive \cite{mneimneh2003reverse}. Consider, for instance, an AND-gate in the original design that is driven by flip-flops (FFs) $a$ and $b$.
If in
the retimed design $a$ and $b$ are forward retimed and therefore moved over the AND-gate resulting in a new
flip-flop (FF) $c$, then a retiming invariant would state $c \equiv a \wedge b$.
The problem of checking, whether a retiming invariant (like $c \equiv a \wedge b$) is an 
inductive invariant, is coNP-complete \cite{DBLP:conf/iccad/JiangH07} (like 
combinational equivalence checking) instead of PSPACE-complete (like general sequential equivalence checking). 
If a design was \emph{only} retimed, we can either reconstruct the retiming invariant in polynomial time
\cite{mneimneh2003reverse} or use preprocessing techniques like maximal forward retiming (\refsec{max_forward}) which is then able to align the FFs.
If retiming was pre- or  succeeded by combinational resynthesis, these techniques can fail (\refsecs{retiming_invariants}{max_forward}). The problem remains likely harder than retiming alone \cite{DBLP:journals/tcad/JiangB06a},
but likely simpler than general sequential equivalence checking---because the next-state functions of the registers are preserved.
However, if sequential synthesis may change
next-state functions, the problem
becomes PSPACE-complete again. Moreover, an unknown number of alternating retiming
and combinational resynthesis steps is already PSPACE-complete \cite{DBLP:journals/tcad/JiangB06a}.

State-of-the-art approaches in sequential equivalence checking, such as ABC's
\texttt{dprove} \cite{DBLP:conf/cav/BraytonM10}, apply preprocessing steps such as maximal
forward retiming to align FFs between revised and original designs. 

However, from a certification perspective, it may be considered to be problematic to make use of
(uncertified) retiming steps to prove the correctness of (sequential logic synthesis
and) retiming. For maximal trust into the verification process, it is desirable 
that the verification engine produces a certificate that can be checked by an independent checker.

In this paper, we propose a \emph{certifying} method for sequential equivalence checking.
Our approach emits certificates under two preprocessing modes: maximal forward retiming
and targeted virtual forward retiming. 
Preprocessing steps like maximal forward retiming are handled by a certificate transformation
producing a certificate for the original problem before the preprocessing.
To the best of our knowledge, there is no other dedicated \emph{certifying} equivalence checker.
Compared to the best general purpose multi-engine model checker rIC3 \cite{DBLP:conf/cav/SuYCBH25}
(winner of two recent Hardware Model Checking Competitions), our approach performs significantly better
on sequential equivalence checking problems.
el checkers like IC3~\cite{DBLP:conf/vmcai/Bradley11}.

Our approach is built on IC3 and remains complete under
arbitrary orders of retiming and resynthesis transformations.
We show that IC3 is closely related in spirit to induction-based equivalence checking
approaches~\cite{DBLP:conf/date/Eijk98, DBLP:conf/fmcad/BjesseC00, DBLP:conf/iccad/JiangH07},
and leverage this connection by combining IC3 with speculated signal correspondences.
The proposed algorithm benefits from sequential-sweep ideas with speculative reduction
\cite{DBLP:conf/iccd/BaumgartnerMPKJ06,DBLP:conf/date/MonyBMB09,DBLP:journals/trets/MishchenkoBJJ11},
such as simulation-based refinement of equivalence classes, while retaining the
advantages of IC3 as a full model checker.
Beyond straightforward certification, we observe that our IC3-based framework 
often shows complementary
strengths compared to $k$-induction-based approaches that \emph{unroll} the transition relation.

\paragraph*{Related Work}

We consider our approach a natural extension of Van Eijk's work \cite{DBLP:conf/date/Eijk98}.
We do not require an exhaustive set of signal correspondences and can still prove
equivalence by automatically adding missing information.
Van Eijk's algorithm was followed up by extensions that incorporated stronger induction ($k$-induction) \cite{DBLP:conf/fmcad/SheeranSS00, DBLP:conf/fmcad/BjesseC00,DBLP:conf/iccad/JiangH07} because the large number of supporting signal correspondences, sometimes depending on virtual signals, 
were considered its main weakness. We argue that our approach complements $k$-induction-based approaches that depend on large unrollings: by using multi-timeframe rarity simulation \cite{DBLP:journals/trets/MishchenkoBJJ11}
we enable targeted identification of candidates for virtual forward retiming and 
additionally leverage the \emph{main strength of IC3} to quickly produce compact 
clausal reachability information to strengthen $k$-inductive signal correspondences into one-inductive ones.

Our approach of speculating invariants is related to research about speculative reduction \cite{DBLP:conf/dac/MonyBPK05, DBLP:conf/date/MonyBMB09, DBLP:conf/iccd/BaumgartnerMPKJ06, DBLP:conf/fmcad/DurejaBGKR24} of suspected corresponding signals. We implicitly prove signal correspondences by \emph{just assuming} that others hold. 

Since our speculated invariants contain internal signals---or even virtual signals that do not exist in the underlying circuit---our approach relates to IC3 with internal signals \cite{DBLP:conf/fmcad/DurejaGIV21}. Their approach is however not able to produce virtual signals if necessary or speculate.

The basic idea of strengthening IC3 with a set of suspected invariants is also applied 
in recent works making use of machine learned invariants \cite{DBLP:conf/aspdac/HuTYZZ24}. They refer to this as `side-loading'. However, compared to our approach, there is no attempt to distinguish between helpful and not helpful invariants. Furthermore, they are agnostic of internal signals.
Our implementation of \texttt{ic3-sim} (see \refalg{ic3_spec_invar} in \refsec{approach}) is related to \cite{DBLP:conf/fmcad/GurfinkelI15} and their concept of may-proof-obligations. We however do not only exclude the \emph{refuted} suspected invariants from propagation, but remove them completely from consideration. We also do not extract additional exact reachability information. 
Furthermore, 
\texttt{ic3-sim}
introduces restrictions like \cite{DBLP:conf/vmcai/SeufertSCRW22}, but with the ultimate goal of proving equivalence, not refuting it 
by detecting deep counterexample paths.

The authors of \cite{DBLP:conf/cav/YuBH20, DBLP:conf/fmcad/YuFBH22} propose an approach for certifying $k$-induction that could theoretically be used
to verify individual proofs of correspondence between two signals in existing equivalence checkers. It is, however, unclear how to obtain a monolithic,
independently checkable certificate for the entire equivalence-checking process. Note that a $k$-induction proof
for the entire sequential equivalence-checking problem did never succeed in our experiments (see \refsec{ret_res_pipe}).
Furthermore, their approach ignores simple-path (also called unique-state) constraints \cite{DBLP:conf/fmcad/SheeranSS00}.
We, however, observed that it is often necessary to consider these constraints in order to prove all
signal correspondences (see \refsec{scorr}).

Recently, certification of preprocessing techniques for hardware model checking has been investigated, for example temporal decompostion \cite{DBLP:conf/fmcad/YuFBH23} and phase abstraction \cite{DBLP:conf/ijcar/FroleyksYBH24}. 
To the best of our knowledge, retiming has not yet been covered as a preprocessing step.
\section{Preliminaries}
\labelsec{prelim}

\subsection{Synchronous Sequential Circuits}
\label{sse:sequentialcircuits}

We describe synchronous sequential circuits as six-tuples $C = (S,X,O,\Delta,\Lambda,i)$ with 
a set $S$ of state variables and a set $X$ of primary inputs. 
$O$ is a set of output variables representing outputs of $C$.
The set $\Delta$ contains for each variable $s \in S$ a Boolean formula $\delta_s$ over variables $S \cup X$
representing its next state function.
The set $\Lambda$ contains for each output variable $o \in O$ a Boolean formula $\lambda_o$ over $S \cup X$
representing the corresponding output function.
$i : S \rightarrow \{0, 1\}$ assigns an initial value $i(s)$ to each
state variable $s \in S$. 
$I(S) = \bigwedge_{s \in S} (s \equiv i(s))$ 
describes the initial state predicate characterizing all initial states that are allowed for $C$.

The transition relation $T(S, X, S')$ of $C$ is a Boolean formula that relates state variables from $S$ with the set of 
their \emph{next state copies} $S'$. $T$ is defined by $T = \bigwedge_{s \in S} (s' \equiv \delta_s)$.

Sequential circuits can be modeled as finite state machines where states are identified with Boolean assignments to variables of $S$.

A \emph{sequential miter} $M(C_1, C_2) = (S, X, O, \Delta, \Lambda, i)$ results from two circuits $C_1 = (S_1, X, O_1, \Delta_1, \Lambda_1, i_1)$ and $C_2 = (S_2, X, O_2, \Delta_2, \Lambda_2, i_2)$
with the same primary inputs and the same \emph{number of} pairwise corresponding outputs.
$S = S_1 \dot{\cup} S_2$, $\Delta = \Delta_1 \dot{\cup} \Delta_2$. 
$O$ contains a single output variable $o_M$,  
$\Lambda$ contains a Boolean formula $\lambda_{o_M}(S, X)$ representing a Boolean
function that outputs a 0 for an assignment to $(S, X)$,
if \emph{corresponding} outputs of $C_1$ and $C_2$ 
evaluate to the same value,
and a 1 otherwise.
The FFs in $M(C_1, C_2)$ are initialized as in $C_1$ and $C_2$, i.e.,
$i(s) = i_1(s)$, if $s \in S_1$, and $i(s) = i_2(s)$, if $s \in S_2$.

We consider $C_1$ and $C_2$ to be \emph{equivalent}, if the output $o_M$ of $M(C_1, C_2)$
remains at logic 0 under all possible input sequences.
The Boolean predicate $\cequiv(S, X)$ represents the formula 
$\lambda_{o_M} \equiv 0$.

\subsection{Retiming}

Retiming \cite{DBLP:journals/algorithmica/LeisersonS91} is a technique for sequential resynthesis
with the goal
of either reducing the number of FFs or the delay of the critical path. 
Retiming transforms a specification circuit \golden{} into an implementation circuit \revised{} by \emph{moving} FFs 
over logic gates or fan-outs.

An FF can be moved from the output of a gate $f$ to \emph{all} of its inputs.
This is called \emph{backward retiming}.
FFs can also be moved from the inputs (removing one FF from each input) to the output of
a gate $f$. This is called \emph{forward retiming}.
Similarly, backward retiming can remove one FF from each branch of a fan-out and add an FF to the fan-out stem.
Forward retiming can move an FF from a fan-out stem to all of its branches.
In the following we denote FFs from \revised{} with an accent like $\revisedff{s}_i$.

\begin{figure}
    \tikzmath{\scale = 0.9; \fdist = \scale * 1; \fffdist = \fdist * 1.5; \outdist = \fdist * 0.2; \idist = \fdist / 10; \ffwidth = \scale * 2; \ffheight = \scale * 5.5; \idisthalf = \idist / 2; \idisttwo = \idist * 2; \idistthree = \idist * 3;}%
\centering
\begin{tabular}{cc}
\textbf{\golden{}} & \textbf{\revised{}} \\[-0.6em]
\hspace{-1em}
\begin{tikzpicture}[>=latex, thick]
  \node (i1) {$x_1$};
  \node[below=\idist cm of i1] (i2) {$x_2$};
  \node[below=\idist cm of i2] (i3) {$x_3$};

  \node[and gate US, draw, logic gate inputs=nn, right=\idistthree cm of i1, scale=\scale, yshift=-\idisttwo cm] (f1) {G};
  \draw (i1.east) -- ++(0.15,0) |- (f1.input 1);
  \draw (i2.east) -- ++(0.15,0) |- (f1.input 2);

  \node[and gate US, draw, logic gate inputs=nn, right=\idistthree cm of f1, scale=\scale, yshift=-0.7cm] (f2) {H};
  \draw (f1.output) -- ++(0.2,0) |- (f2.input 1);
  \draw (i3.east) -- ++(1.15,0) |- (f2.input 2);

  \node[right=\fffdist cm of f2] (o) {$o_1$};
  \draw (f2) -- node[pos=0.22, draw, rectangle, minimum width=\ffwidth mm, minimum height=\ffheight mm, fill=red!20]{{\tiny $s_1$}} 
                node[pos=0.65, draw, rectangle, minimum width=\ffwidth mm, minimum height=\ffheight mm, fill=blue!20]{{\tiny $s_2$}}(o);
\end{tikzpicture}
&\hspace{-1.5em}
\begin{tikzpicture}[>=latex, thick]
  \node (i1) {$x_1$};
  \node[below=\idist cm of i1] (i2) {$x_2$};
  \node[below=\idist cm of i2] (i3) {$x_3$};

  \node[and gate US, draw, logic gate inputs=nn, right=\fdist cm of i1, scale=\scale, yshift=-\idisttwo cm] (f1) {G};
  \draw (i1.east) -- node[pos=0.5, draw, rectangle, minimum width=\ffwidth mm, minimum height=\ffheight mm, fill=red!20]{\tiny{$\revisedff{s}^1_{1}$}} ++(0.8,0) |- (f1.input 1);
  \draw (i2.east) -- node[pos=0.5, draw, rectangle, minimum width=\ffwidth mm, minimum height=\ffheight mm, fill=red!20]{\tiny{$\revisedff{s}^2_{1}$}} ++(0.8,0) |- (f1.input 2);
  \node[and gate US, draw, logic gate inputs=nn, right=\fdist cm of f1, scale=\scale, yshift=-0.7cm] (f2) {H};
  \draw (f1.output) -- node[pos=0.5, draw, rectangle, minimum width=\ffwidth mm, minimum height=\ffheight mm, fill=blue!20]{\tiny{$\revisedff{s}^1_{2}$}}  ++(0.75,0) |- (f2.input 1);
  \draw (i3.east) -- node[pos=0.5, draw, rectangle, minimum width=\ffwidth mm, minimum height=\ffheight mm, fill=red!20]{\tiny{$\revisedff{s}^3_{1}$}} 
                 ++(0.8,0) |- node[pos=0.84, draw, rectangle, minimum width=\ffwidth mm, minimum height=\ffheight mm, fill=blue!20]{\tiny{$\revisedff{s}^2_{2}$}}(f2.input 2);

  \node[right=\outdist cm of f2] (o) {$o_2$};
  \draw (f2) -- (o);
\end{tikzpicture}
\end{tabular}
\vspace{-1.0em}
    \caption{A simple backward retimed circuit.\labelfig{retiming}}
    \vspace*{-5mm}
\end{figure}
\begin{example}\labelexample{retiming}
    We consider the two circuits depicted in \reffig{retiming}.
    Circuit \golden{} has three primary inputs $x_1,x_2,x_3$, two FFs $s_1, s_2$, and an output $o_1$.
    The output is computed by the FF output of $s_2$, whereas the next state functions of $s_1, s_2$ are defined as
    $\delta_{s_1} = (x_1 \wedge x_2) \wedge x_3$
    and $\delta_{s_2} = s_1$.
    The circuit \revised{} has the same interface, but FF $s_1$ in \golden{} is backward retimed 
    to 
    $\revisedff{s}^1_{1},\revisedff{s}^2_{1},\revisedff{s}^3_{1}$
    in \revised{} over both AND gates $G$ and $H$, FF $s_2$ is backward retimed to 
    $\revisedff{s}^1_{2},\revisedff{s}^2_{2}$ only over AND gate $H$.
\end{example}

Retiming adjusts the initial states as well. 
Assume that $i(s_1) = \epsilon_1$ and $i(s_2) = \epsilon_2$ in \refexample{retiming}.
Then $i(\revisedff{s}_{21})$ and $i(\revisedff{s}_{22})$ are chosen in a way that 
$i(\revisedff{s}^1_{2}) \wedge i(\revisedff{s}^2_{2}) = \epsilon_2$.
$i(\revisedff{s}^1_{1})$, $i(\revisedff{s}^2_{1})$, and $i(\revisedff{s}^3_{1})$ are chosen such that
$(i(\revisedff{s}^1_{1}) \wedge i(\revisedff{s}^2_{1})) \wedge i(\revisedff{s}^3_{1}) = \epsilon_1$.
Finding a ``correct initialization'' is not a problem for forward retimings, since circuits
implement Boolean \emph{functions} and a correct initialization of the implementation circuit
can be computed by applying Boolean functions.
It is easy to see that a correct initialization does not exist for all \emph{backward} retimings and
all initializations of the FFs in \golden{}. 
\section{Techniques used in Retiming Verification}
\labelsec{approach}

In this section we give a brief review of techniques that inspired our approach to verifying
and at the same time certifying the correctness of retiming and synthesis techniques.

\subsection{Retiming Invariants}\labelsec{retiming_invariants}

One approach to verify retiming is the construction of a so-called retiming invariant \cite{mneimneh2003reverse}.
Retiming invariants are a special case of inductive invariants.

One possibility to prove that some predicate
$P(S, X)$ (like $\cequiv(S, X)$ expressing
equivalence of two sequential circuits as described in Sect.~\ref{sse:sequentialcircuits}) 
holds in all reachable states of a sequential circuit $C$ (like a miter circuit according
to Sect.~\ref{sse:sequentialcircuits}), i.e., to prove that $P(S, X)$ is an \emph{invariant} of $C$,
is to compute a so-called \emph{inductive} invariant
$INV(S)$. $INV(S)$ is called an inductive invariant, if the following holds:
\begin{align}
   &I(S) \implies\Inv(S) \label{inv1} \\
   &\Inv(S) \implies P(S, X) \label{inv2} \\
   &\Inv(S) \wedge T(S,X,S') \implies \Inv(S') \label{inv3}
\end{align}
Condition (\ref{inv1}) requires that $INV(S)$ holds in all
initial states. Condition (\ref{inv3}) requires that $INV(S)$ is
\emph{inductive}, i.e., it holds in the next state, if it holds in the current state.
Condition (\ref{inv1}) and (\ref{inv3}) imply that $INV$ is an invariant. 
Condition (\ref{inv2}) says that 
$P(S, X)$ holds in all states satisfying the inductive invariant $INV(S)$,
i.e., it is an invariant as well.

The computation of a retiming invariant relies on the observation that
retiming can be decomposed into a series of basic FF movements
(backward and forward movements of FF over gates and fanouts, see Sect. II.B).
The work in \cite{mneimneh2003reverse} shows how to reconstruct this series of FF movements
just from a pair of circuits \golden{} and \revised{}.
The retiming invariant can be computed step by step from this series of FF movements.
For more details see \refsec{retiming_preproc}.

The advantage of the approach using retiming invariants is that the retiming invariant
is not destroyed, if we apply combinational logic synthesis after retiming.
(This is immediately clear, since combinational logic synthesis may change
the representations of transition and output functions, but not their semantics.)
The disadvantage of the approach lies in the fact that it is only applicable
to retiming alone. If combinational or even sequential logic synthesis has already been
applied, the reconstruction of the original retiming FF movements does not work anymore.
Moreover, sequential logic synthesis may destroy an existing retiming invariant.

\subsection{Maximal Forward Retiming}\labelsec{max_forward}

Another option that is used in approaches like \texttt{dprove} in ABC \cite{DBLP:conf/cav/BraytonM10}
is Maximal Forward Retiming that moves all FFs as far
as possible into the direction of the primary outputs, both in \golden{} and \revised{}.

Maximal Forward Retiming is a successful verification approach, especially if only retiming
has been used without any combinational or sequential logic synthesis.
The following example shows however that its success may already be destroyed
by weak combinational logic synthesis operations:

\begin{figure}
    \tikzmath{\scale = 0.9; \fdist = \scale * 0.9; \ffdist = \fdist * 1.5; \fffdist = \fdist * 1.7; \ffffdist = \fdist * 1.9;\outdist = \fdist * 0.2; \idist = \fdist / 5; \ffwidth = \scale * 1.5; \ffheight = \scale * 5.5; \idisthalf = \idist / 2; \idistthird = \idist / 3; \idisttwo = \idist * 2;}%
\centering
\begin{tabular}{cc}
\hspace{2cm}\textbf{\golden{}} & \hspace{2cm}\textbf{\revised{}} \\[-0.8em]
\hspace{-1em}
\begin{tikzpicture}[>=latex, thick]
  \node (i0) {$x_0$};
  \node[below=\idist cm of i0] (i1) {$x_1$};
  \node[below=\idist cm of i1] (i2) {$x_2$};
  \node[below=\idist cm of i2] (i3) {$x_3$};

  \node[and gate US, draw, logic gate inputs=nn, right=\idisttwo cm of i0, scale=\scale, yshift=-\idist cm] (f0) {G};
  \draw (i0.east) -- ++(0.15,0) |- (f0.input 1);
  \draw (i1.east) -- ++(0.15,0) |- (f0.input 2);

  \node[and gate US, draw, logic gate inputs=nn, right=\fffdist cm of f0, scale=\scale, yshift=-\idisttwo cm] (f1) {H};
  \draw (f0.output) -- node[pos=0.25, draw, rectangle, minimum width=\ffwidth mm, minimum height=\ffheight mm, fill=red!20]{{\tiny $s_1$}} 
                node[pos=0.75, draw, rectangle, minimum width=\ffwidth mm, minimum height=\ffheight mm, fill=blue!20]{{\tiny $s_2$}} ++(\ffdist,0) |- (f1.input 1);
  \draw (i2.east) -- node[pos=0.32, draw, rectangle, minimum width=\ffwidth mm, minimum height=\ffheight mm, fill=red!20]{{\tiny $s_1$}} 
                node[pos=0.75, draw, rectangle, minimum width=\ffwidth mm, minimum height=\ffheight mm, fill=blue!20]{{\tiny $s_2$}} ++(\fffdist,0) |- (f1.input 2);

  \node[and gate US, draw, logic gate inputs=nn, right=\idist cm of f1, scale=\scale, yshift=-0.7cm] (f2) {I};
  \draw (f1.output) -- ++(0.1,0) |- (f2.input 1);
  \draw (i3.east) -- ++(\ffffdist,0) |- (f2.input 2);

  \node[right=\outdist cm of f2] (o) {};
  \draw (f2) -- (o);
\end{tikzpicture}
&\hspace{-1.5em}
\begin{tikzpicture}[>=latex, thick]
  \node (i0) {$x_0$};
  \node[below=\idist cm of i0] (i1) {$x_1$};
  \node[below=\idist cm of i1] (i2) {$x_2$};
  \node[below=\idist cm of i2] (i3) {$x_3$};

  \node[and gate US, draw, logic gate inputs=nn, right=\ffffdist cm of i0, scale=\scale, yshift=-\idist cm] (f0) {G};
  \draw (i0.east) -- node[pos=0.3, draw, rectangle, minimum width=\ffwidth mm, minimum height=\ffheight mm, fill=red!20]{{\tiny $\revisedff{s}^1_{1}$}} 
                node[pos=0.75, draw, rectangle, minimum width=\ffwidth mm, minimum height=\ffheight mm, fill=blue!20]{{\tiny $\revisedff{s}^1_{2}$}} ++(\fffdist,0) |- (f0.input 1);
  \draw (i1.east) -- node[pos=0.3, draw, rectangle, minimum width=\ffwidth mm, minimum height=\ffheight mm, fill=red!20]{{\tiny $\revisedff{s}^2_{1}$}} 
                node[pos=0.75, draw, rectangle, minimum width=\ffwidth mm, minimum height=\ffheight mm, fill=blue!20]{{\tiny $\revisedff{s}^2_{2}$}} ++(\fffdist,0) |- (f0.input 2);

  \node[and gate US, draw, logic gate inputs=nn, right=\ffffdist cm of i2, scale=\scale, yshift=-\idisttwo cm] (f1) {J};
  \draw (i2.east) -- node[pos=0.3, draw, rectangle, minimum width=\ffwidth mm, minimum height=\ffheight mm, fill=red!20]{{\tiny $\revisedff{s}_1$}} 
                node[pos=0.75, draw, rectangle, minimum width=\ffwidth mm, minimum height=\ffheight mm, fill=blue!20]{{\tiny $\revisedff{s}_2$}} ++(\fffdist,0) |- (f1.input 1);
  \draw (i3.east) -- ++(\fffdist,0) |- (f1.input 2);
  \node[and gate US, draw, logic gate inputs=nn, right=2.5 cm of i1, scale=\scale, yshift=-\idist cm] (f2) {K};
  \draw (f0.output) -- ++(\idisthalf,0) |- (f2.input 1);
  \draw (f1.output) -- ++(\idisthalf,0) |- (f2.input 2);

  \node[right=\outdist cm of f2] (o) {};
  \draw (f2) -- (o);
\end{tikzpicture}
\end{tabular}
\vspace{-1.5em}
    \caption{A backward retimed and resynthesized circuit.\labelfig{retiming_virt}}
    \vspace*{-5mm}
\end{figure}

\begin{example}\labelexample{retiming_most_forward}
We consider the two \emph{sequentially equivalent} circuits in \reffig{retiming_virt}. The 
circuit \revised{} results from \golden{} by  
retiming followed by combinational resynthesis.
After backward retiming of FFs $s_1$ and $s_2$, gates $H, I$ are replaced by
$J, K$.
It is easy to see that, without the resynthesis step, Maximal Forward Retiming 
would transform both \golden{} and \revised{} into identical circuits with all 
FFs moved to the line that is driven by the AND-gate $H$.
The resynthesis step of rewriting $H, I$ to $J, K$, though, hinders Maximal Forward Retiming 
and it will not be able to align the FFs.
\end{example}

This observation explains that Maximal Forward Retiming (together with other
retiming operations like retiming for minimizing the number of FFs) is only used
as a preprocessing step in approaches like \texttt{dprove} in ABC \cite{DBLP:conf/cav/BraytonM10} to simplify
the equivalence checking problem. This preprocessing step may or may not help for the
overall verification task.

\subsection{Signal Correspondences}\labelsec{scorr}

As already mentioned in \refsec{retiming_invariants}, combinational and / or sequential logic synthesis
may destroy the possibility of computing a retiming invariant.
Therefore approaches like \cite{DBLP:conf/date/Eijk98,DBLP:conf/fmcad/BjesseC00, DBLP:conf/iccad/JiangH07} compute signal correspondences between signals
in \golden{} and \revised{} instead, in order to simplify the verification task.
There is a signal correspondence between two signals in a sequential circuit, if the two signals
always assume the same values in all runs of the sequential circuit starting with the initial state.
Candidates for signal correspondence can be computed by sequential simulation,
the proof of signal correspondences has to be performed by more powerful techniques
like $k$-induction. 
Signal correspondences are typically computed step by step based on already
proven signal correspondences.
Two corresponding signals are replaced by one representative.
It is clear that the verification task has been solved as soon as 
signal correspondences have been proved for all corresponding outputs
of \golden{} and \revised{}.

Our approach is inspired by the computation of signal correspondences, but
it uses speculated signal correspondences as a means to construct an inductive
invariant in a modified IC3 framework.

\subsection{Virtual Forward Retiming}\labelsec{virt_retiming}

Virtual Forward Retiming has been introduced in the work of Van Ejik \cite{DBLP:conf/date/Eijk98} in the context of sequential equivalence checking of retimed circuits.
In our approach we use Virtual Forward Retiming to improve the computation of
(speculated) signal correspondences (for computing speculated invariants).

We recall Virtual Forward Retiming informally.
For a gate $G$ whose inputs are driven by FFs
$s_1,\dots,s_m$ with input signals $t_1,\dots,t_m$, respectively,
we define by ``Virtual Forward Retiming'' a virtual signal $l\equiv G(t_1,\dots,t_m)$. 
The virtual signal is left
dangling, therefore this operation leaves the transition system unchanged.
Virtual Forward Retiming only augments the set of candidate signals for
signal correspondences. The additional signal correspondences
can be used for constructing inductive invariants.
(If all $t_j$ are also driven by FFs, we can also retime them over $G$. 
In that way, Virtual Forward Retiming can also retime two or more \emph{layers} of FFs.)

The following example shows that even in the case of pure retiming without logic synthesis
the computation of signal correspondences is not enough for obtaining an \emph{inductive} invariant, but
Virtual Forward Retiming is needed to find an inductive invariant based on signal correspondences:

\begin{figure}
    \tikzmath{\scale = 0.9; \fdist = \scale * 1; \fffdist = \fdist * 1.5; \outdist = \fdist * 0.2; \idist = \fdist / 10; \ffwidth = \scale * 2; \ffheight = \scale * 5.5; \idisthalf = \idist / 2; \idisttwo = \idist * 2; \idistthree = \idist * 3;}%
\centering
\begin{tabular}{cc}
\textbf{\golden{}} & \textbf{\revised{}} \\[-0.8em]
\hspace{-1em}
\begin{tikzpicture}[>=latex, thick]
  \node (i1) {$x_1$};
  \node[below=\idist cm of i1] (i2) {$x_2$};
  \node[below=\idist cm of i2] (i3) {$x_3$};

  \node[and gate US, draw, logic gate inputs=nn, right=\idistthree cm of i1, scale=\scale, yshift=-\idisttwo cm] (f1) {G};
  \draw (i1.east) -- ++(0.15,0) |- (f1.input 1);
  \draw (i2.east) -- ++(0.15,0) |- (f1.input 2);

  \node[and gate US, draw, logic gate inputs=nn, right=\idistthree cm of f1, scale=\scale, yshift=-0.7cm] (f2) {H};
  \draw (f1.output) -- ++(0.2,0) |- (f2.input 1);
  \draw (i3.east) -- ++(1.15,0) |- (f2.input 2);

  \node[right=\fffdist cm of f2] (o) {$o_1$};
  \draw (f2) -- node[pos=0.22, draw, rectangle, minimum width=\ffwidth mm, minimum height=\ffheight mm, fill=red!20]{{\tiny $s_1$}} 
                node[pos=0.65, draw, rectangle, minimum width=\ffwidth mm, minimum height=\ffheight mm, fill=blue!20]{{\tiny $s_2$}}(o);
\end{tikzpicture}
&\hspace{-1.5em}


\begin{tikzpicture}[>=latex, thick]
  \node (i1) {$x_1$};
  \node[below=\idist cm of i1] (i2) {$x_2$};
  \node[below=\idist cm of i2] (i3) {$x_3$};

  \node[and gate US, draw, logic gate inputs=nn, right=\fdist cm of i1, scale=\scale, yshift=-\idisttwo cm] (f1) {G};
  \draw (i1.east) -- node[pos=0.5, draw, rectangle, minimum width=\ffwidth mm, minimum height=\ffheight mm, fill=red!20]{\tiny{$\revisedff{s}^1_{1}$}} ++(0.8,0) |- (f1.input 1);
  \draw (i2.east) -- node[pos=0.5, draw, rectangle, minimum width=\ffwidth mm, minimum height=\ffheight mm, fill=red!20]{\tiny{$\revisedff{s}^2_{1}$}} ++(0.8,0) |- (f1.input 2);

  \node[and gate US, draw, logic gate inputs=nn, right=\fdist cm of f1, scale=\scale, yshift=-0.7cm] (f2) {H};
  \draw (f1.output) -- ++(0.1,0) coordinate (preff12)
                    -- node[pos=0.5, draw, rectangle, minimum width=\ffwidth mm, minimum height=\ffheight mm, fill=blue!20]{\tiny{$\revisedff{s}^1_{2}$}} ++(0.7,0) |- (f2.input 1);
  \draw (i3.east) -- node[pos=0.5, draw, rectangle, minimum width=\ffwidth mm, minimum height=\ffheight mm, fill=red!20]{\tiny{$\revisedff{s}^3_{1}$}} ++(0.8,0) coordinate (preff22)
                 |- node[pos=0.87, draw, rectangle, minimum width=\ffwidth mm, minimum height=\ffheight mm, fill=blue!20]{\tiny{$\revisedff{s}^2_{2}$}} (f2.input 2);

  \fill (preff12) circle (1.5pt);
  \fill (preff22) circle (1.5pt);

  \node[and gate US, draw, dash pattern=on 1pt off 1pt, logic gate inputs=nn, below=0.3cm of f2, scale=\scale] (f2v) {H'};
  \draw[dash pattern=on 1pt off 1pt] (preff12) |- (f2v.input 1);
  \draw[dash pattern=on 1pt off 1pt] (preff22) |- (f2v.input 2);

  \node[right=\outdist cm of f2] (o2) {$o_2$};
  \draw (f2) -- (o2);
  \node[right=\outdist cm of f2v] (o3) {$l$};
  \draw[dash pattern=on 1pt off 1pt] (f2v) -- (o3);
\end{tikzpicture}
\end{tabular}
\vspace{-1.0em}
    \caption{A backward retimed circuit with virtual forward retiming.\labelfig{retiming_gates_virt}}
    \vspace*{-5mm}
\end{figure}
\begin{example}\labelexample{retiming_gates_virt}
    We revisit \refexample{retiming} again and assume that all FFs are initialized to 0.
	To establish equivalence would mean that we have to prove that $s_2 \equiv \revisedff{s}^1_{2} \wedge \revisedff{s}^2_{2}$.
    The constraint $s_2 \equiv \revisedff{s}^1_{2} \wedge \revisedff{s}^2_{2}$ alone does not form
    an inductive invariant.
    A simple counterexample to induction would be that $\delta_{s_2} = s_1 = 0$
	and  $\delta_{\revisedff{s}^1_{2}} \wedge \delta_{\revisedff{s}^2_{2}} = (\revisedff{s}^1_{1} \wedge \revisedff{s}^2_{1}) \wedge \revisedff{s}^3_{1} = 1$.
	Obviously, this counterexample is unreachable from the initial all-zero state. 
   It is easy to see that no other signal correspondences exist between \golden{} and \revised{}.
However, if we add 
the signal $l \equiv (\revisedff{s}^1_{1} \wedge \revisedff{s}^2_{1}) \wedge \revisedff{s}^3_{1}$ by \emph{virtually} forward retiming $\revisedff{s}^1_{2}$ and $\revisedff{s}^2_{2}$ over
$H$, we get another signal correspondence $s_1 \equiv l$. It is easy to see that together
$(s_2 \equiv \revisedff{s}^1_{2} \wedge \revisedff{s}^2_{2}) \wedge (s_1 \equiv (\revisedff{s}^1_{1} \wedge \revisedff{s}^2_{1}) \wedge \revisedff{s}^3_{1})$ become inductive. We 
illustrate this transformation in \reffig{retiming_gates_virt}.
\end{example}

This observation motivated us to include Virtual Forward Retiming into our framework
using speculated invariants.

\subsection{IC3}
\labelsec{ic3pdr}

IC3~\cite{DBLP:conf/vmcai/Bradley11,DBLP:conf/fmcad/EenMB11} is a general purpose method to prove invariant properties $P(S,X)$, like $\cequiv(S,X)$, of a system by finding an inductive invariant $\Inv(S)$ for which it holds that $\Inv(S) \Rightarrow P(S,X)$. 

In our work we present a modified IC3 framework that is tailored towards sequential equivalence checking
problems that may result from retiming and logic synthesis.

IC3 incrementally constructs overapproximations, called the trace, $F_0, F_1,\ldots,F_n$ of the reachable states in up to $i \in \{1,\ldots,n\}$ steps---with one exception: $F_0(S) = I(S)$ is represented exactly. The $F_i$ are gradually strengthened by removing provably unreachable states (starting from $I$) that contain predecessors of the \emph{bad} states (states that violate $P(S,X)$ under at least one input assignment on $X$ \emph{as well as their predecessors}). 
A \emph{bad} state $c$ in time frame $i$ -- such a state is called a \emph{proof obligation} (POB) --
has to be proven unreachable within up to $i$ steps, because otherwise there would be a counterexample to $P(S,X)$ being invariant. The first candidate in a tree of POBs is always a state satisfying $\neg P(S,X)$, extracted as a satisfying assignment from $F_n \wedge \neg P(S,X)$.

\subsubsection{\textbf{ResolveObligations()}}\labelsec{resolve_obl}
POBs are resolved in a recursive manner by a procedure we name \emph{ResolveObligations()}. \emph{ResolveObligations(($F_0, F_1,\ldots,F_n$), $c(S)$, $i$)} operates on the current trace of all time frames $F_0, F_1,\ldots,F_n$ and tries to resolve a POB $c$ in time frame $F_i$.
It queries a SAT solver with $\neg c(S) \wedge F_{i-1}(S) \wedge T(S, X, S') \wedge c(S')$. If this is unsatisfiable
then $c$ is resolved and removed from $F_i$ (because it is proven unreachable within up to $i$ steps) by 
adding resp. \emph{learning} the clause $\neg c$ (which can also be generalized) in all time frames $F_k$ with $k \in \{1,\ldots, i\}$.

If $\neg c(S) \wedge F_{i-1}(S) \wedge T(S, X, S') \wedge c(S')$ is satisfiable, the satisfying assignment $d$ on $S$ forms a new POB in frame $i-1$, and \emph{ResolveObligations()} recurses on \emph{ResolveObligations(($F_0, F_1,\ldots,F_n$), $d(S)$, $i-1$)}, and so on. If a POB in $F_0$ is encountered, then \emph{ResolveObligations()} reports a counterexample ($s, <x_1,\ldots,x_n>)$\footnote{Note that a simple and common optimization enables IC3 to report counterexamples longer than the trace. For the sake of simplicity, though, we assume that this optimization is not in place.} with 
an initial state $s(S) \implies F_0(S)$ and an input sequence $<x_1,\ldots,x_n>$ to reach $\neg P(S,X)$. 
Then IC3 reports `\emph{UNSAFE}'.

\subsubsection{\textbf{Propagate()}}\labelsec{propagate}
Once all POBs are resolved in the current trace $F_0, F_1,\ldots,F_n$ (and therefore $F_n \wedge \neg P(S,X)$ is unsatisfiable), the trace is increased by one new time frame $F_{n+1} = \top$ into $F_0, F_1,\ldots,F_n, F_{n+1}$
and the procedure \emph{Propagate()} is called. 
\emph{Propagate($F_1,\ldots,F_n, F_{n+1}$)} tries to propagate \emph{all} learned clauses into later 
time frames. Thus, for each pair of time frames $F_i, F_{i+1}$ with $i \in \{1,\ldots,n\}$, it tries to propagate all clauses $\neg c \in (F_i \setminus F_{i+1})$ from $F_i$ to $F_{i+1}$ via querying a SAT solver with $F_i(S) \wedge T(S,X,S') \wedge c(S')$ (propagation is successful, if this is unsatisfiable).

If at some point during \emph{Propagate()} $F_i \equiv F_{i+1}$, then \emph{Propagate()} and therefore also IC3 reports `\emph{SAFE}'.

Following these principles, IC3 maintains the following invariants: (1) $I \implies F_i$ for all $i \in \{0,\ldots,n\}$, (2) $F_i(S) \implies P(S,X)$ for all $i \in \{0,\ldots,n-1\}$, (3) $F_i(S) \implies F_{i+1}(S)$ for all $i \in \{0,\ldots,n-1\}$, and (4) $F_i(S) \wedge T(S,X,S') \implies F_{i+1}(S')$ for all $i \in \{0,\ldots,n-1\}$. 

\section{Our Approach}

Our approach is designated to verify and end-to-end certify miters between
an original sequential circuit and a version that results from retiming and resynthesis.
It works with speculated invariants and integrates the computation of signal correspondences by a ``sequential sweep''
(\refsec{scorr}) into IC3 (\refsec{ic3pdr}). 
We strengthen the sequential sweep by targeted virtual forward retiming (\refsec{virt_retiming}).
Moreover, we use 
\emph{certified} preprocessing by maximal forward retiming (\refsec{retiming_preproc}).

We speculate FF-output-to-signal (or delayed signal, \refsec{sse:sequentialsimulation})
correspondences in equivalence classes.
We extend IC3 by a simulation engine for equivalence class refinement while retaining its 
capabilities as a complete model checking algorithm (\refsec{ssweep_ic3}).

We call our approach \texttt{ic3-sim} if only virtual forward retiming is performed lazily
and \texttt{ic3-sim-F} if most-forward retiming is performed upfront.

\subsection{Sequential Simulation}
\labelsec{sse:sequentialsimulation}

As already mentioned in \refsec{scorr} we use sequential simulation to obtain candidate pairs
of equivalent signals
to be used as speculated invariants in our framework.
By simulating the miter similar to \cite{mishchenko2012using}, we compute equivalence classes
of potentially equivalent signals. 

If we want to find an inductive signal correspondence without unrolling, it can be 
necessary to produce virtual signals by \emph{virtual forward retiming} (see \refsec{virt_retiming}).
In order to find candidates, we need a means to simulate the circuit over a 
multi-timeframe window. This window is bounded by what is often referred to as
\emph{register depth} $d$ of the circuit \cite{DBLP:conf/iccad/JiangH07}. 

\begin{definition}[Register Depth]
    The \emph{register depth} $d$ of a synchronous sequential circuit $C$ is the maximum
    number of FFs on any simple directed path from a primary input to a primary
    output in $C$, where a simple path visits no node more than once.
\end{definition}

During simulation of the current time frame, we keep the signal values of the past 
$d-1$ frames. 
If, e.g., a FF $s$ always holds the same value that a line $l$ holds $n$ time frames later (with $0 <n < d$),
we say that $s$ with a \emph{delay} of zero is equivalent 
to $l$ with a \emph{delay} of $n$.
$l$ is then a candidate for virtual retiming of $n$ layers of FFs driving
it, like described in \refsec{virt_retiming}.

Another viable (and complete) option is to ignore delayed correspondences and let
IC3 add the missing parts. 
Preliminary results showed that this is sometimes even beneficial,
but in most cases not.

\subsection{A Simple Check For Suspected Invariants}
Here we describe a simple check whether a suspected invariant $\Inv$ is really
inductive. Later on in \refsec{ssweep_ic3}, we will show that IC3 can be used to 
effectively simulate this simple check.
Moreover, we extend IC3 for computing a real inductive invariant
for the case that the speculated invariant is not a real inductive invariant.

The suspected invariant $\Inv$ will most likely be provided to us by 
FF-output-to-signal (or virtual signal) correspondences found by sequential simulation.
Another option could be hints emitted by the synthesis tool that applied the 
transformation steps. 
Our approach can make use of speculated invariants from any source.

An option to prove $\Inv(S) \wedge T(S,X,S') \implies \Inv(S')$ from Condition~(\ref{inv3}) is to iteratively prove that \emph{for all}
conjuncts $r_i$ of $\Inv$, it holds that
	$\Inv(S) \wedge T(S, X, S') \implies r_i(S')$.

In \refalg{retiming_invar} we describe a SAT-based procedure that decides whether a given suspected invariant $\Inv$ is indeed a proof of sequential equivalence.

\begin{algorithm}[tb]
\begin{footnotesize}
\ForEach{conjunct $r_i$ in $\Inv$}{
	\lIf{SAT? [ $I(S) \wedge \neg r_i(S)$ ]}{
		\Return \emph{FAIL}
	}
}
\ForEach{conjunct $r_i$ in $\Inv$}{\lIf{SAT? [ $\Inv(S) \wedge T(S, X, S') \wedge \neg r_i(S')$ ]}{\Return \emph{FAIL}}}
\leIf{SAT? [ $\Inv(S) \wedge \neg \cequiv(S,X)$ ]}{\Return \emph{FAIL}\;}{	\Return \emph{SUCCESS} }
\end{footnotesize}
\caption{Simple induction check.\labelalg{retiming_invar}}
\end{algorithm}

\subsection{IC3 Sequential Sweep}\labelsec{ssweep_ic3}
\paragraph{\textbf{\texttt{ic3-load-$F_1$}}}
Our first proposal is to use IC3 to verify sequential equivalence and
to initialize the first frame $F_1$ with information extracted from a
suspected retiming invariant $\Inv$.
We describe our modification of IC3 in \refalg{ic3_retiming_invar}
which we call \texttt{ic3-load-$F_1$}.
We use the ideas of \texttt{ic3-load-$F_1$} to prepare the introduction of
\refalg{ic3_spec_invar} later on. 

\begin{algorithm}[b]
\begin{footnotesize}
\ForEach{conjunct $r_i$ in $\Inv$}{
	\lIf{SAT? [ $I(S) \wedge \neg r_i(S)$ ]}{
		\label{line:i_implies_R}
		$\Inv = \Inv \setminus \{r_i\}$
	}
}
\ForEach{conjunct $r_i$ in $\Inv$}{
	\lIf{SAT? [ $I(S) \wedge T(S,X, S') \wedge \neg r_i(S')$ ]}{
		\label{line:i_and_T_implies_R}
		$\Inv = \Inv \setminus \{r_i\}$
	}
}
\leIf{IC3$_{[F_1:=\Inv]} (I,T, \cequiv)$ == \emph{SAFE}}{
	\label{line:F1_equals_R}
	\Return \emph{EQUIVALENT}\;
}{
	\Return \emph{INEQUIVALENT}
}
\end{footnotesize}
\caption{\texttt{ic3-load-$F_1$\labelalg{ic3_retiming_invar}}}
\end{algorithm}

By establishing that $I \implies \Inv$ (line \ref{line:i_implies_R}) and $I(S) \wedge T(S,X,S') \implies \Inv(S')$ (line \ref{line:i_and_T_implies_R})
before we let $F_1 := \Inv$ (line \ref{line:F1_equals_R}), we guarantee that $F_0 \implies F_1$, $F_0(S) \wedge T(S,X,S') \implies F_1(S')$.

We analyze
how \refalg{ic3_retiming_invar} behaves in the two relevant scenarios:
1) \refalg{retiming_invar} reports \emph{SUCCESS}, i.e. for the given $\Inv$ Conditions (\ref{inv1}) - (\ref{inv3})
hold 
and
2) \refalg{retiming_invar} reports \emph{FAIL} if $\Inv$ was not correctly retrieved.
\textbf{1)} Assume that \refalg{retiming_invar} reports \emph{SUCCESS}.
	We run \refalg{ic3_retiming_invar} on $\Inv$ and the same sequential miter.
	From Condition (\ref{inv1}) we know that $I \implies \Inv$ and from $I \implies \Inv$ and $\Inv(S) \wedge T(S,X,S') \implies \Inv(S')$ 
	(Condition (\ref{inv3})) we know that $ I(S) \wedge T(S,X, S') \implies \Inv(S')$. Thus, the checks in lines \ref{line:i_implies_R} and
	\ref{line:i_and_T_implies_R} are unsatisfiable and no conjunct will be removed from $\Inv$.
	We know from Condition (\ref{inv2}) that $\Inv \implies \cequiv$ and therefore, since $F_1 := \Inv$, the check
	$F_1(S) \wedge \neg \cequiv(S,X)$ (\refsec{ic3pdr}) will be unsatisfiable and a new frame $F_2 = \top$ is opened.
	IC3 will propagate \emph{all} (see \refsec{propagate}) clauses to $F_2$ because $F_1(S) \wedge T(S,X,S') \implies F_1(S')$
	follows from $\Inv(S) \wedge T(S,X,S') \implies \Inv(S')$ (Condition (\ref{inv3})).
	IC3 will terminate after \emph{one} propagation phase and conclude sequential equivalence.

    \textbf{2)} Assume that \refalg{retiming_invar} reports \emph{FAIL}. All conjuncts $r$ of $\Inv$ that violate $I \implies r$ or 
	$I(S) \wedge T(S, S') \implies r(S')$ are immediately sorted out in lines \ref{line:i_implies_R} and \ref{line:i_and_T_implies_R}.
    If a non-invariant conjunct $r$ excludes states which are reachable within up to $i$ steps, then 
    it will not be propagated any further than $F_{i-1}$. Hence, like any other non-invariant lemma learned by IC3 automatically, it does
	not affect soundness or completeness.
    If an $r$ is invariant, but $\neg r$ is reachable from $INV$ 
    (i.e., $INV$ is not yet strong enough to prove that $r$ is invariant), 
    then IC3 may derive
    lemmas $L$ such that $\neg r$ cannot be reached from  $INV \wedge L$.
	IC3 will strengthen ``the actually invariant 
    parts'' of $\Inv$ such that an inductive invariant results.

\texttt{ic3-load-}$F_1$ is a complete algorithm to prove or refute sequential equivalence. 
If $\Inv$ is an inductive invariant, it does not require more SAT calls than \refalg{retiming_invar}.

We do not provide a formal proof for correctness and termination, but the intuition is simple: we  only add clauses to $F_1$ that could be added by IC3 automatically as well. We guarantee
that the remaining conjuncts from $\Inv$ that we add represent an over-approximation of all reachable states in up to one step.

\paragraph{\textbf{Equivalence Classes}}
If we use sequential simulation as a means to guess equivalent signals, it will provide us
with a number of suspected equivalence classes. 
If there is an invariant consisting only of signal correspondences between FFs
and internal signals, including those added by virtual forward retiming
(this holds, if only retiming has been applied, but it may not be the case,
if logic synthesis has been applied as well), then
all relevant equivalences are equivalences between pairs of signals in those
equivalence classes, since the speculated equivalences overapproximate the real ones.
Thus, if \texttt{ic3-load-}$F_1$ then produces for each signal pair $(x,y)$ of each equivalence class
the two clauses $(\neg x \vee y)$ and $(x \wedge \neg y)$,\footnote{
Note that, technically speaking, the CNF representation of the transition relation $T$
includes Tseitin transformations of the circuit functions $\delta_x$ and $\delta_y$ computing 
signals $x$ and $y$,
such that $x \equiv y \wedge T$ establishes an equivalence $\delta_x \equiv \delta_y$.
This observation also holds for signals produced by virtual forward retiming.
}
then it will discard non-invariant correspondences 
either initially in  lines \ref{line:i_implies_R} / \ref{line:i_and_T_implies_R} of \refalg{ic3_retiming_invar}
or it will discard them by not further propagating them within IC3.

The number of signal pairs, however, is quadratic in the sizes of equivalence classes,
which may result in a large number of expensive calls to the SAT solver.
A solution could be to exploit transitivity by, e.g., representing a speculated equivalence 
class only with a linear number of equivalences with one representative of the equivalence class
(e.g. the earliest node wrt. a topological order of the sequential miter circuit).
However, if this representative does not occur in the subset of really equivalent signals for this equivalence class,  
then all clauses for this equivalence class would be discarded.

Simple \texttt{ic3-load-}$F_1$ 
showed weaknesses in first experimental evaluations, since it
either suffers from huge numbers of clauses inserted into $F_1$ or
essential equivalences may get lost when exploiting transitivity as described above. 
Therefore we come up with a refinement of the basic idea that mitigates those problems.

\paragraph{\textbf{\texttt{ic3-sim}}}

Our actual implementation \texttt{ic3-sim}, which is presented in \refalg{ic3_spec_invar}, is able to make use of the simulator 
for two purposes: Guessing initial equivalence classes and performing a targeted refinement
of the equivalence classes.

We introduce a procedure \emph{Refine}($s, \vec{x}, \Inv$) that takes a start state $s$, a (possibly empty) 
input sequence $\vec{x}$ and a suspected invariant $\Inv$ as input. It refines the equivalence classes that 
are represented by $\Inv$ by simulating from $s$ via $\vec{x}$ and rebuilds the clausal representation 
$\Inv$, if simulation finds new \emph{non-equivalences}.

We represent the guessed equivalence classes as aforementioned: We build equivalence
clauses between each element and the representative of the class.\footnote{ 
Note that our implementation only considers equivalences between FF outputs $s$ and signals $t$ depending
on FF outputs. Thus $\Inv$ only depends on $S$.}
In contrast to \texttt{ic3-load-}$F_1$, this CNF $\Inv(S)$ 
is not used to initialize $F_1$, but it is
used to restrict the behaviour of $T$ by letting 
$T^{\Inv}(S,X,S') = T(S,X,S') \wedge \Inv(S)$.

We retrieve the initial version of $\Inv$ by \textit{InitINV()} (line \ref{line:preproc}) that applies 
sequential simulation, possibly augmented by virtual forward retiming (\refsec{virt_retiming}).

Like in \texttt{ic3-load-}$F_1$, we remove each candidate $r_i(S)$  
where $I(S) \implies r_i(S)$ or $I(S) \wedge T(S,X,S') \implies r_i(S')$
does not hold.

We then let IC3 run as usual (line \ref{line:ic3_std}).
If IC3 terminates under $T^{\Inv}$ and reports `\emph{UNSAFE}' with a counterexample (line 
\ref{line:inequivalent}), then it is a real counterexample because $T^{\Inv}$ under-approximates the behaviour of $T$.

If it reports `\emph{SAFE}' in some frame $F_k$ though, then we only know that $F_k(S) \wedge\Inv(S) \wedge T(S,X,S') \implies F_k(S')$ but it is unknown whether $F_k(S) \wedge\Inv(S) \wedge T(S,X,S') \implies F_k(S')\wedge\Inv(S')$. Therefore, we have to check for each 
$r_i \in \Inv$ whether SAT?[$F_k(S) \wedge T^{\Inv}(S,X,S') \wedge \neg r_i(S')$].

If the call (line \ref{line:maypo_sat}) is \emph{satisfiable}, then refinement may become necessary.
We  treat $\neg r_i$ exactly like an IC3 POB (spawning and resolving POBs recursively
by calling \emph{ResolveObligations()}, see \refsec{resolve_obl}),
 with the only difference that upon detecting a \emph{counterexample}, i.e. a path from the 
 initial states $I$ to $\neg r_i$, we use that sequence of inputs to seed our simulator for refinement. 
 We call these POBs `may-POBs' as coined by \cite{DBLP:conf/fmcad/GurfinkelI15}.

Refinement is \emph{not necessarily} needed. The may-POB can also be proven unreachable
from the initial states, then \emph{ResolveObligations()} will return an empty counterexample state. 
Along the way of proving that, IC3 will have learned additional strengthening clauses. For efficiency
reasons, our implementation augments the original \emph{ResolveObligations()} by an upper bound of 
recursions. 
\emph{ResolveObligations}($(F_0, .., F_n),\neg r_i, k+1, \text{cutoff\_budget} = c$)
will try to resolve the POB $\neg r_i(S)$
only until $c$ new POBs are spawned recursively.
If the budget is exceeded, it will return the latest POB, and thus the POB in the earliest time frame that was considered. 
From this state (which is not an initial state) we reconstruct a path to  $\neg r_i(S)$ as a counterexample.
(In the special case that 
the cutoff budget is 0, \emph{ResolveObligations()} 
will return the counterexample to induction from the check 
SAT?[$F_k(S) \wedge \Inv(S) \wedge T(S,X,S') \wedge \neg r_i(S')$]
similar to \cite{DBLP:conf/dac/MonyBPK05,DBLP:conf/date/MonyBMB09}.)

Given that $r_i = (x \equiv y)$, 
refinement (line \ref{line:refine}) will in all cases \emph{at least} remove either $x$ or $y$ from its suspected equivalence class.

If at least one refinement step was triggered during iterating over $\Inv$, then
$\Inv$ is reduced and therefore weakened. Thus, learned clauses by IC3 under the old $\Inv$ can
become spurious. The best solution we found was also the simplest: We restart IC3 and try to salvage
as many learned clauses as possible (further explained in \refsec{ic3sim_restart}).

\begin{algorithm}[tb]
\SetInd{0.3em}{0.5em}
\begin{footnotesize}
$\Inv = \mathit{InitINV()}$\;\label{line:preproc}
\ForEach{conjunct $r_i$ in $\Inv$}{
	\lIf{($\text{init},-$) := SAT? [ $I(S) \wedge \neg r_i(S)$ ]}{
		\textit{Refine}($\text{init}, -, \Inv$)
	}
}
\ForEach{conjunct $r_i$ in $\Inv$}{
	\lIf{($\text{init}$,$x_0$) := SAT? [ $I(S) \wedge T(S,X, S') \wedge \neg r_i(S')$ ]}{
		\textit{Refine}($\text{init}, x_0, \Inv$)
	}
}
\While{\emph{true}}{
    $T^{\Inv} = T \wedge \Inv$; $F_0 = I(S)$; $F_1 = \top$; $n := 1$\;
	\eIf{IC3($I,\bm{T^{\Inv}},\cequiv, (F_0, \ldots, F_n)$) == \emph{`SAFE'}}{\label{line:ic3_std}
		spurious $:=$ \textit{false}\;
		$(F_0, \ldots, F_n) := $ frames of current IC3 execution\;
		\While{\textit{Propagate}($(F_0, \ldots, F_n), T^{\Inv}$) == \emph{`SAFE'}}{
		    $F_k :=$ possibly spurious ind. invariant in frame $k$\;
			possibly\_spurious $:=$ \textit{false}\;
			\ForEach{conjunct $r_i$ in $\Inv$}{
				\If{SAT? [ $F_k(S) \wedge T^{\Inv}(S,X,S') \wedge \neg r_i(S')$ ]}{
					\label{line:maypo_sat}
					possibly\_spurious $:= \mathit{true}$\;
					$(s, <x_1,\ldots, x_n>):=$ \textit{ResolveObligations}($(F_0, .., F_n),\neg r_i, k+1, \text{cutoff}$)\;
					\uIf{$\neg$ s.empty()}{
						\textit{Refine($s, <x_1,\ldots, x_n>, \Inv$)}; spurious := \textit{true}\;\label{line:refine}
					}
				}
			}
			\lIf{spurious}{ $F_1 = \top$; $n:=1$; \textbf{break} }\label{line:restart}
			\lIf{$\neg$possibly\_spurious}{	\Return \emph{`EQUIVALENT'} }\label{line:equivalent}
		}
	}{
		\Return \emph{`INEQUIVALENT'}\;\label{line:inequivalent}
	}
}
\end{footnotesize}
\caption{\texttt{ic3-sim}\labelalg{ic3_spec_invar}}
\end{algorithm}

Note that our simulations for refinement do not necessarily start with an initial state,
if the cutoff budget is exceeded, before we have found a path from an initial state to $\neg r_i$.
In that case, simulation may add \emph{``non-equivalences''}
which are not really proven to be non-equivalences. 
However, this does not affect the completeness of the 
method, since IC3 is complete anyway, even without (speculated) equivalences. 

Preliminary experimental results
showed that it pays off to make the removal of speculated equivalences more aggressive than 
needed, since it may be counterproductive to invest too much effort into the 
proof of equivalences that are too hard to prove and possibly not needed.

The higher the cutoff, the more effort we put into additional strengthening. In some
cases, like simple retiming verification, where aggressive refinement is helpful, this
might hurt efficiency. On the other hand, we experienced that it pays off if additional
stronger synthesis steps were applied to \revised{}.

The approach is \emph{complete}, because in the worst case $\Inv$ is weakened to the point that $T^{\Inv} = T$.
Therefore, standard IC3 would be applied eventually.

\paragraph{\textbf{Restart}}\labelsec{ic3sim_restart}
Restarting IC3 might hurt, if many additional lemmas were deduced by IC3 along the 
way. 
Therefore, before restart we check for each previously learnt lemma (with the same check as in lines \ref{line:i_implies_R} and \ref{line:i_and_T_implies_R} of \refalg{ic3_retiming_invar}) whether it is sound to 
add it to frame $F_1$. Thus, the new IC3 run does not start from scratch, but with the previously learnt lemmas which can be used in frame $F_1$. Later on, those lemmas may be propagated by 
\emph{Propagate()}.

\subsection{Retiming to Preprocess Sequential Equivalence Checking}\labelsec{retiming_preproc}

We have to prove that the output of the miter $M^{orig} := M(C^{\text{golden}}, C^{\text{revised}})$
remains at logic 0 under all possible input sequences.
As already mentioned in Sect. III.B., to reach this goal, sequential equivalence checkers like \texttt{dprove}
from ABC~\cite{DBLP:conf/cav/BraytonM10} often use retiming on
$M^{orig}$, resulting in a miter $M^{pp}$, as a preprocessing
step to make the equivalence checking problem easier.
\texttt{dprove} uses ``maximal forward retiming'', retiming minimizing the number
of FFs, and combinations thereof.
However, from a certification perspective, it is problematic to make use of
(uncertified) retiming to prove the correctness of (sequential logic synthesis
and) retiming.
In this paper we follow a different approach:
We use retiming as a preprocessing step as well, but we
produce an inductive invariant $INV^{pp}$ certifying the correctness
of $M^{pp}$ (as described in the previous sections)
\emph{and} we transform $INV^{pp}$ into an
inductive invariant $INV^{orig}$ certifying the correctness of $M^{orig}$.
In this section we show how to do this transformation and we prove the correctness
of the transformation.

Our approach is inspired by \cite{mneimneh2003reverse} where a retiming invariant
is computed for \golden{} and \revised{} under the assumption that 
\revised{} results from retiming only, see also Sect. III.A..
\cite{mneimneh2003reverse} gives an intuition for a proof, but neither the retiming invariant they present
is complete nor the proof is complete. 

Since the retiming for preprocessing is done within the verification tool, 
we have the advantage that we \emph{know} the series of retiming moves performed
by the verification tool.

We assume that we performed $n$ FF movements starting from 
the miter $M^{orig}$, finally arriving at the miter $M^{pp}$.
In addition to the FF movements, corresponding changes of 
the  initial states were performed.
We assume that we computed an inductive invariant $INV^{pp}$ for
$M^{pp}$. To transform $INV^{pp}$ into $INV^{orig}$, we consider
the \emph{reverse} series of $n$ FF movements turning $M^{pp}$ into
$M^{orig}$, i.e., we consider a series of miter circuits $M^{(0)}, \ldots,
M^{(n)}$ with $M^{(0)} := M^{pp}$, $M^{(n)} := M^{orig}$, and $M^{(i+1)}$
results from $M^{(i)}$ by a reverted FF movement.
The assignments to the initial states are reverted as well.
The corresponding equivalence predicates of the miters
$M^{(i)}$ are $\cequiv^{(i)}$ ($0 \leq i \leq n$).
We define $R^{(0)} := INV^{pp}$ and we construct (step by step)
$R^{(1)}, \ldots, R^{(n)}$ with $INV^{orig} := R^{(n)}$.

\begin{figure}[t]
\centering
\includegraphics[width=0.8\columnwidth]{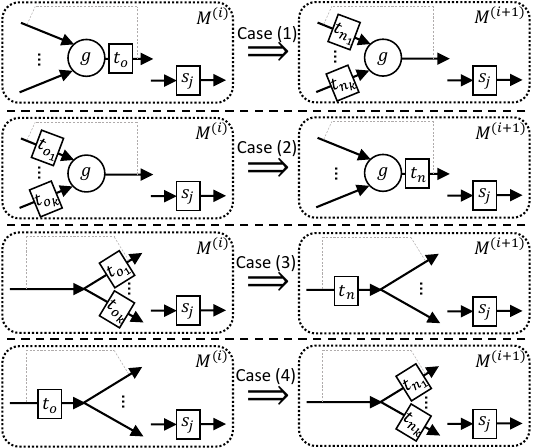}
\caption{Cases (1) -- (4) of retiming operations.}
\label{fig:cases}
\vspace*{-5mm}
\end{figure}

For the definition of $R^{(i+1)}$ from $R^{(i)}$ 
($0 \leq i \leq n-1$) and for the changed initial states we consider 4 cases, see
Fig.~\ref{fig:cases} for an illustration:

    (1) $M^{(i+1)}$ results from $M^{(i)}$ by backward retiming over some gate $g$ with $k$ inputs.
    Let $S_i = \{s_1, \ldots, s_{m_i}\} \dot\cup \{t_o\}$ be the states
    in $M^{(i)}$,   
    $t_o$ is the moved FF.
    Then $S_{i+1} = \{s_1, \ldots, s_{m_i}\} \dot\cup \{t_{n_1}, \ldots, t_{n_k}\}$ are the states in $M^{(i+1)}$  
    where $t_{n_1}, \ldots, t_{n_k}$ are the new FFs. 
    It holds 
    $i^{(i)}(s_j) = i^{(i+1)}(s_j)$ for all $1 \leq j \leq m_i$ and 
    $i^{(i)}(t_o) = g(i^{(i+1)}(t_{n_1}), \ldots, i^{(i+1)}(t_{n_k}))$, since the 
    original (forward) FF movement leading from $M^{(i+1)}$ to $M^{(i)}$ made exactly this 
    change to the initial state assignment.
    $R^{(i+1)} := \exists t_o [R^{(i)} \wedge (t_o \equiv g(t_{n_1}, \ldots,$ $t_{n_k}))]$.
    
    (2) $M^{(i+1)}$ results from $M^{(i)}$ by forward retiming over some gate $g$ with $k$ inputs.
    Let $S_i = \{s_1, \ldots, s_{m_i}\} \dot\cup \{t_{o_1}, \ldots,$ $t_{o_k}\}$ be the states in $M^{(i)}$, 
    $t_{o_1}, \ldots, t_{o_k}$ are the moved FFs. $t_{n}$ is the new FF.
    Then $S_{i+1} = \{s_1, \ldots, s_{m_i}\} \dot\cup \{t_{n}\}$
    are the states in $M^{(i+1)}$.
    It holds 
    $i^{(i)}(s_j) = i^{(i+1)}(s_j)$ for all $1 \leq j \leq m_i$ and 
    $i^{(i+1)}(t_n) = g(i^{(i)}(t_{o_1}), \ldots, i^{(i)}(t_{o_k}))$, since the 
    original (backward) FF movement leading from $M^{(i+1)}$ to $M^{(i)}$ made exactly this 
    change to the initial state assignment and it had been forbidden, if
    such an initial state assignment to $t_{o_1}, \ldots, t_{o_k}$ did not exist.
    $R^{(i+1)} := \exists t_{o_1}, \ldots, t_{o_k} [R^{(i)} \wedge (t_{n} \equiv g(t_{o_1}, \ldots, t_{o_k}))]$.
    
    (3) $M^{(i+1)}$ results from $M^{(i)}$ by backward retiming over some fanout with $k$ branches.
    Let $S_i = \{s_1, \ldots, s_{m_i}\} \dot\cup \{t_{o_1}, \ldots, t_{o_k}\}$  be the states in $M^{(i)}$, 
    $t_{o_1}, \ldots,$ $t_{o_k}$ are the moved FFs.
    $t_{n}$ is the new FF.
    Then $S_{i+1} = \{s_1, \ldots, s_{m_i}\} \dot\cup$ $\{t_{n}\}$
    are the states in $M^{(i+1)}$.
    It holds 
    $i^{(i)}(s_j) = i^{(i+1)}(s_j)$ for all $1 \leq j \leq m_i$ and 
    $i^{(i+1)}(t_n) = i^{(i)}(t_{o_1}) = \ldots = i^{(i)}(t_{o_k})$, since the 
    original (forward) FF movement leading from $M^{(i+1)}$ to $M^{(i)}$ made exactly this 
    change to the initial state assignment. 
    $R^{(i+1)} := \exists t_{o_1}, \ldots, t_{o_k} [R^{(i)} \wedge \bigwedge_{l=1}^k(t_{o_l} \equiv t_{n})]$.

    (4) $M^{(i+1)}$ results from $M^{(i)}$ by forward retiming over some fanout with $k$ branches.
    Let $S_i = \{s_1, \ldots, s_{m_i}\} \dot\cup \{t_{o}\}$
    be the states in $M^{(i)}$,   
    $t_{o}$ is the moved FF.
    $t_{n_1}, \ldots, t_{n_k}$ are the new FFs.
    Then $S_{i+1} = \{s_1, \ldots,$ $s_{m_i}\} \dot\cup \{t_{n_1}, \ldots, t_{n_k}\}$ are the states in $M^{(i+1)}$.
    It holds 
    $i^{(i)}(s_j) = i^{(i+1)}(s_j)$ for all $1 \leq j \leq m_i$ and 
    $i^{(i)}(t_o) = i^{(i+1)}(t_{n_1}) = \ldots = i^{(i+1)}(t_{n_k})$, since the 
    original (backward) FF movement leading from $M^{(i+1)}$ to $M^{(i)}$ made exactly this 
    change to the initial state assignment and the FF movement had been forbidden, if
    not all $i^{(i+1)}(t_{n_1}), \ldots, i^{(i+1)}(t_{n_k})$ would be equal.    
    $R^{(i+1)} := \exists t_{o} [R^{(i)} \wedge \bigwedge_{l=1}^k(t_{o} \equiv t_{n_l})]$.

\begin{theorem}
\labeltheorem{ret_inv}
If $R^{(0)}$ is an inductive invariant of the miter $M^{(0)}$
proving $\cequiv^{(0)}$, then
(a) $R^{(i)}$ holds in all initial states of $M^{(i)}$,
(b) $R^{(i)}$ implies $\cequiv^{(i)}$ and
(c) $R^{(i)}$ is inductive in the miter $M^{(i)}$
for $0 \leq i \leq n$.
\end{theorem}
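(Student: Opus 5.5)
The proof is by induction on $i$, with the step from $R^{(i)}$ to $R^{(i+1)}$ treated in one uniform way for all four cases. For $i=0$ there is nothing to show, since $R^{(0)}=INV^{pp}$ satisfies Conditions (\ref{inv1})--(\ref{inv3}) for $M^{(0)}$ by assumption.

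\textbf{A uniform form of the step.} In each of the cases (1)--(4), $R^{(i+1)}$ has the shape
\[ R^{(i+1)}(S_{i+1}) = \exists U\,[\rho(S_{i+1},U) \wedge R^{(i)}(S_{i+1}\cap S_i, U)], \]
where:
\begin{itemize}
\item $U = S_i \setminus S_{i+1}$ are the FFs that occur only in $M^{(i)}$;
\item $\rho$ is the local constraint linking old and new FFs, namely $t_o\equiv g(t_{n_1},\ldots,t_{n_k})$, $t_n\equiv g(t_{o_1},\ldots,t_{o_k})$, $\bigwedge_l t_{o_l}\equiv t_n$, or $\bigwedge_l t_o\equiv t_{n_l}$.
\end{itemize}
So $\rho$ is a relation between states $s$ of $M^{(i+1)}$ and states $u$ of $M^{(i)}$ that agree on the shared FFs $s_1,\ldots,s_{m_i}$. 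It then suffices to prove three simulation properties of $\rho$:
\begin{enumerate}
\item[(i)] the initial state of $M^{(i+1)}$ is $\rho$-related to the initial state of $M^{(i)}$;
\item[(ii)] if $\rho(s,u)$, then for every input $x$ the miter output of $M^{(i+1)}$ at $(s,x)$ equals that of $M^{(i)}$ at $(u,x)$;
\item[(iii)] if $\rho(s,u)$ and $T^{(i+1)}(s,x,s')$, then the unique $u'$ with $T^{(i)}(u,x,u')$ satisfies $\rho(s',u')$.
\end{enumerate}

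\textbf{Deriving (a)--(c) from (i)--(iii).} Given (i)--(iii), each of (a)--(c) follows directly.
\begin{itemize}
\item (a): take the witness $u=$ initial state of $M^{(i)}$. Then $\rho$ holds by (i), and $R^{(i)}(u)$ holds by the induction hypothesis (a).
\item (b): for a witness $u$ of $R^{(i+1)}(s)$, the induction hypothesis gives $\cequiv^{(i)}(u,x)$ for all $x$, and (ii) transfers this to $\cequiv^{(i+1)}(s,x)$.
\item (c): from a witness $u$ for $s$, step (iii) produces $u'$ with $\rho(s',u')$, and inductiveness of $R^{(i)}$ gives $R^{(i)}(u')$. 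Hence $u'$ witnesses $R^{(i+1)}(s')$.
\end{itemize}

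\textbf{Verifying (i)--(iii) case by case.}
\begin{itemize}
\item Property (i) is exactly the initial-value relation stated in each case. For example, in case (2) the witness is $i^{(i)}(t_{o_1}),\ldots,i^{(i)}(t_{o_k})$, whose image under $g$ is $i^{(i+1)}(t_n)$.
\item For (ii) and (iii), observe that the next-state functions and output functions of $M^{(i+1)}$ arise from those of $M^{(i)}$ by substitution at the retimed location. In case (1), a fanin that read $t_o$ now reads $g(t_{n_1},\ldots,t_{n_k})$; in case (2), a fanin that read $g(t_{o_1},\ldots,t_{o_k})$ now reads $t_n$; and analogously for the fanout cases. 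Under $\rho$ the substituted expressions evaluate equally, so all shared next-state functions $\delta_{s_j}$ and the output $\lambda_{o_M}$ agree. This gives (ii) and the shared-FF part of (iii).
\item For the moved FFs one checks that $\rho$ is re-established in the successor.
\begin{itemize}
\item Case (1): the next values of $t_{n_l}$ are the gate's input signals, and the next value of $t_o$ is $g$ applied to those same signals.
\item Case (2): $t_n'$ is $g$ of the gate inputs, and these inputs are exactly $u'$ restricted to the $t_{o_l}$.
\item Cases (3) and (4): all copies share the same next-state function (that of the stem), so the equalities propagate.
\end{itemize}
\end{itemize}

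\textbf{Main obstacle.} I expect the main obstacle to be making the ``substitution'' claim precise. The gate inputs may themselves be FF outputs or other signals that depend on moved FFs, and one has to argue that every signal of $M^{(i+1)}$ outside the retimed region computes, under $\rho$, the same value as its counterpart in $M^{(i)}$. I would handle this by an induction over the topological order of the combinational logic: every combinational node not at the retimed gate or fanout is unchanged, and at the retimed location the equality is exactly $\rho$.
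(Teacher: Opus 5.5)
Your proposal is correct and matches the paper's proof in substance. The paper argues by induction on $i$: it extends an assignment satisfying $R^{(i+1)}$ by an existential witness for the moved FFs, applies the hypothesis for $M^{(i)}$, and transfers the result back through the linking constraint, handling cycles through $g$ by arguing that signal values agree outside the retimed location, as your topological-order argument does. The only difference is presentation: you factor the per-case reasoning once into the three simulation properties of $\rho$, whereas the paper repeats the case split (1)--(4) separately inside each of (a), (b) and (c).
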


The proof differentiates between four cases (1) to (4) of FF movements. For the definitions
of $R^{(i+1)}$ from $R^{(i)}$ in cases (1) to (4) 
and for the corresponding transformations of the initial states 
see \refsec{retiming_preproc}.
Fig.~\ref{fig:appendix} illustrates the FF movements in the different cases.

\begin{proof} 
We prove \reftheorem{ret_inv} by induction over $i$.
The base case follows easily from the fact that 
$R^{(0)} = INV^{pp}$ is an inductive invariant of $M^{(0)}$, i.e.,
it satisfies parts (a), (b), and (c) of \reftheorem{ret_inv}.

We start with the induction step for part (a).

Case~(1): We consider the initial state assignment 
$(i^{(i+1)}(s_1), \ldots, i^{(i+1)}(s_{m_1}), i^{(i+1)}(t_{n_1}), \ldots, i^{(i+1)}(t_{n_k}))$ 
of $M^{(i+1)}$.
We know from the induction assumption that 
the initial state
$(i^{(i)}(s_1), \ldots, i^{(i)}(s_{m_i}), i^{(i)}(t_o))$ of $M^{(i)}$
satisfies $R^{(i)}$.
Since $i^{(i)}(t_o) = g(i^{(i+1)}(t_{n_1}), \ldots, i^{(i+1)}(t_{n_k}))$ and
$i^{(i)}(s_j) = i^{(i+1)}(s_j)$ for all $1 \leq j \leq m_i$,
$(i^{(i+1)}(s_1),$ \linebreak
$\ldots, i^{(i+1)}(s_{m_i}), i^{(i)}(t_o), i^{(i+1)}(t_{n_1}), \ldots, i^{(i+1)}(t_{n_k}))$
satisfies
$R^{(i)} \wedge (t_o \equiv g(t_{n_1}, \ldots,$ $t_{n_k}))$.
Thus, 
$(i^{(i+1)}(s_1), \ldots, i^{(i+1)}(s_{m_i}), i^{(i+1)}(t_{n_1}), \ldots, i^{(i+1)}(t_{n_k}))$
satisfies
$\exists t_o [R^{(i)} \wedge (t_o \equiv g(t_{n_1}, \ldots,$ $t_{n_k}))] = R^{(i+1)}$.

\begin{figure*}[t]
\centering
\includegraphics[width=\textwidth]{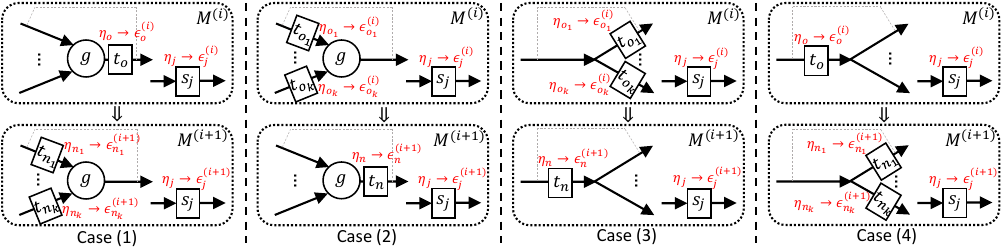}
\caption{Cases (1) -- (4) of retiming operations. The values for the FFs shown in red 
are chosen in the induction step for part (c) of \reftheorem{ret_inv}.}
\label{fig:appendix}
\end{figure*}

The proof for Cases (2), (3), and (4) is very similar:

Case~(2): By induction assumption the initial state 
$(i^{(i)}(s_1), \ldots, i^{(i)}(s_{m_i}), i^{(i)}(t_{o_1}), \ldots, i^{(i)}(t_{o_k}))$ 
of $M^{(i)}$ satisfies $R^{(i)}$.
Since $i^{(i+1)}(t_n) = g(i^{(i)}(t_{o_1}), \ldots, i^{(i)}(t_{o_k}))$ and 
$i^{(i)}(s_j) = i^{(i+1)}(s_j)$ for all $1 \leq j \leq m_i$,
$(i^{(i+1)}(s_1), \ldots, i^{(i+1)}(s_{m_i}), i^{(i)}(t_{o_1}), \ldots, i^{(i)}(t_{o_k}), i^{(i+1)}(t_n))$
satisfies 
$R^{(i)} \wedge (t_{n} \equiv g(t_{o_1}, \ldots, t_{o_k}))$.
Thus, the initial state 
$(i^{(i+1)}(s_1), \ldots, i^{(i+1)}(s_{m_i}), i^{(i+1)}(t_n))$
of $M^{(i+1)}$ satisfies
$\exists t_{o_1}, \ldots, t_{o_k} [R^{(i)} \wedge (t_{n} \equiv g(t_{o_1}, \ldots, t_{o_k}))] = R^{(i+1)}$.

Case~(3): By induction assumption the initial state
$(i^{(i)}(s_1), \ldots, i^{(i)}(s_{m_i}), i^{(i)}(t_{o_1}), \ldots, i^{(i)}(t_{o_k}))$ 
of $M^{(i)}$ satisfies $R^{(i)}$.
Since $i^{(i+1)}(t_n) = i^{(i)}(t_{o_1}) = \ldots = i^{(i)}(t_{o_k})$ and
$i^{(i)}(s_j) = i^{(i+1)}(s_j)$ for all $1 \leq j \leq m_i$,
$(i^{(i+1)}(s_1), \ldots, i^{(i+1)}(s_{m_i}), i^{(i)}(t_{o_1}), \ldots, i^{(i)}(t_{o_k}), i^{(i+1)}(t_n))$
satisfies 
$R^{(i)} \wedge \bigwedge_{l=1}^k(t_{o_l} \equiv t_{n})$.
Thus, the initial state 
$(i^{(i+1)}(s_1), \ldots, i^{(i+1)}(s_{m_i}), i^{(i+1)}(t_n))$
of $M^{(i+1)}$ satisfies
$\exists t_{o_1}, \ldots, t_{o_k} [R^{(i)} \wedge \bigwedge_{l=1}^k(t_{o_l} \equiv t_{n})] = R^{(i+1)}$.

Case~(4): By induction assumption the initial state
$(i^{(i)}(s_1), \ldots, i^{(i)}(s_{m_i}), i^{(i)}(t_o))$ 
of $M^{(i)}$ satisfies $R^{(i)}$.
Since $i^{(i)}(t_o) = i^{(i+1)}(t_{n_1}) = \ldots = i^{(i+1)}(t_{n_k}))$ and
$i^{(i)}(s_j) = i^{(i+1)}(s_j)$ for all $1 \leq j \leq m_i$,
$(i^{(i+1)}(s_1),$ \linebreak $\ldots, i^{(i+1)}(s_{m_i}), i^{(i)}(t_o), i^{(i+1)}(t_{n_1}), \ldots, i^{(i+1)}(t_{n_k}))$
satisfies
$R^{(i)} \wedge \bigwedge_{l=1}^k(t_{o} \equiv t_{n_l})$.
Thus, the initial state 
$(i^{(i+1)}(s_1), \ldots, i^{(i+1)}(s_{m_i}), i^{(i+1)}(t_{n_1}), \ldots, i^{(i+1)}(t_{n_k}))$
of $M^{(i+1)}$ satisfies
$\exists t_{o} [R^{(i)} \wedge \bigwedge_{l=1}^k(t_{o} \equiv t_{n_l})] = R^{(i+1)}$.

Now we consider the induction step for part (b).

Case~(1): The output function $\cequiv^{(i+1)}$ of the miter $M^{(i+1)}$
results from $\cequiv^{(i)}$ by the function composition 
$\cequiv^{(i+1)} = \cequiv^{(i)}|_{t_o \leftarrow g(t_{n_1}, \ldots, t_{n_k})}$.
We consider an arbitrary assignment  
$A^{(i+1)} = (\eta_1, \ldots, \eta_{m_i}, \eta_{n_1}, \ldots, \eta_{n_k}, \xi_1, \ldots, \xi_m)$ 
to the FFs in $M^{(i+1)}$ and to the primary inputs that satisfies $R^{(i+1)}$.
It follows from the semantics of existential quantification that this assignment can be
extended by an assignment $\eta_o$ to the moved FF $t_o$ such that
$R^{(i)} \wedge (t_o \equiv g(t_{n_1}, \ldots, t_{n_k}))$ is satisfied
by the resulting 
$A^{(i)} = (\eta_1, \ldots, \eta_{m_i}, \eta_{n_1}, \ldots, \eta_{n_k}, \eta_o, \xi_1, \ldots, \xi_m)$.
$A^{(i)}$ defines a complete assignment to the FFs both for $M^{(i+1)}$ and $M^{(i)}$ and to the
primary inputs.
Since $A^{(i)}$ satisfies $R^{(i)}$, we can conclude from the induction hypothesis 
that $A^{(i)}$ satisfies $\cequiv^{(i)}$.
Since $A^{(i)}$ satisfies $t_o \equiv g(t_{n_1}, \ldots, t_{n_k})$,
$A^{(i)}$ also satisfies $\cequiv^{(i)}|_{t_o \leftarrow g(t_{n_1}, \ldots, t_{n_k})} = \cequiv^{(i+1)}$.
Since $\cequiv^{(i+1)}$ does not depend on $t_o$,
also $A^{(i+1)}$ satisfies $\cequiv^{(i+1)}$.

Case~(2): The output function $\cequiv^{(i)}$ of the miter $M^{(i)}$
results from $\cequiv^{(i+1)}$ by the function composition 
$\cequiv^{(i)} = \cequiv^{(i+1)}|_{t_n \leftarrow g(t_{o_1}, \ldots, t_{o_k})}$.
We consider an arbitrary assignment  
$A^{(i+1)} = (\eta_1, \ldots, \eta_{m_i}, \eta_{n}, \xi_1, \ldots, \xi_m)$ 
to the FFs in $M^{(i+1)}$ and to the primary inputs that satisfies $R^{(i+1)}$.
It follows from the semantics of existential quantification that this assignment can be
extended by an assignment $(\eta_{o_1}, \ldots, \eta_{o_k})$ to the moved FFs 
$t_{o_1}, \ldots, t_{o_k}$ such that
$R^{(i)} \wedge (t_{n} \equiv g(t_{o_1}, \ldots, t_{o_k}))$ is satisfied
by the resulting 
$A^{(i)} = (\eta_1, \ldots, \eta_{m_i}, \eta_{n}, \eta_{o_1}, \ldots, \eta_{o_k}, \xi_1, \ldots, \xi_m)$.
$A^{(i)}$ defines a complete assignment to the FFs both for $M^{(i+1)}$ and $M^{(i)}$ and to the
primary inputs.
Since $A^{(i)}$ satisfies $R^{(i)}$, we can conclude from the induction hypothesis 
that $A^{(i)}$ satisfies $\cequiv^{(i)} = \cequiv^{(i+1)}|_{t_n \leftarrow g(t_{o_1}, \ldots, t_{o_k})}$.
Since $A^{(i)}$ satisfies $t_{n} \equiv g(t_{o_1}, \ldots, t_{o_k})$,
$A^{(i)}$ also satisfies $\cequiv^{(i+1)}$.
Since $\cequiv^{(i+1)}$ does not depend on $t_{o_1}, \ldots, t_{o_k}$,
also $A^{(i+1)}$ satisfies $\cequiv^{(i+1)}$.

Case~(3): The output function $\cequiv^{(i+1)}$ of the miter $M^{(i+1)}$
results from $\cequiv^{(i)}$ by the function composition 
$\cequiv^{(i+1)} = \cequiv^{(i)}|_{t_{o_1} \leftarrow t_n, \ldots, t_{o_k} \leftarrow t_n}$.
We consider an arbitrary assignment  
$A^{(i+1)} = (\eta_1, \ldots, \eta_{m_i}, \eta_{n}, \xi_1, \ldots, \xi_m)$ 
to the FFs in $M^{(i+1)}$ and to the primary inputs that satisfies $R^{(i+1)}$.
It follows from the semantics of existential quantification that this assignment can be
extended by an assignment $(\eta_{o_1}, \ldots, \eta_{o_k})$ to the moved FFs $t_{o_1}, \ldots, t_{o_k}$ 
such that
$R^{(i)} \wedge \bigwedge_{l=1}^k(t_{o_l} \equiv t_{n})$
is satisfied
by the resulting 
$A^{(i)} = (\eta_1, \ldots, \eta_{m_i}, \eta_{n}, \eta_{o_1}, \ldots, \eta_{o_k}, \xi_1, \ldots, \xi_m)$.
$A^{(i)}$ defines a complete assignment to the FFs both for $M^{(i+1)}$ and $M^{(i)}$ and to the
primary inputs.
Since $A^{(i)}$ satisfies $R^{(i)}$, we can conclude from the induction hypothesis 
that $A^{(i)}$ satisfies $\cequiv^{(i)}$.
Since $A^{(i)}$ satisfies $\bigwedge_{l=1}^k(t_{o_l} \equiv t_{n})$,
we have $\eta_{o_1} = \eta_{n}, \ldots, \eta_{o_k} = \eta_{n}$ and thus
$A^{(i)}$ also satisfies
$\cequiv^{(i)}|_{t_{o_1} \leftarrow t_n, \ldots, t_{o_k} \leftarrow t_n} = \cequiv^{(i+1)}$.
Since $\cequiv^{(i+1)}$ does not depend on $t_{o_1}, \ldots, t_{o_k}$,
also $A^{(i+1)}$ satisfies $\cequiv^{(i+1)}$.

Case~(4): The output function $\cequiv^{(i)}$ of the miter $M^{(i)}$
results from $\cequiv^{(i+1)}$ by the function composition 
$\cequiv^{(i)} = \cequiv^{(i+1)}|_{t_{n_1} \leftarrow t_o, \ldots, t_{n_k} \leftarrow t_o}$.
We consider an arbitrary assignment  
$A^{(i+1)} = (\eta_1, \ldots, \eta_{m_i}, \eta_{n_1}, \ldots, \eta_{n_k}, \xi_1, \ldots, \xi_m)$ 
to the FFs and to the primary inputs in $M^{(i+1)}$ that satisfies $R^{(i+1)}$.
It follows from the semantics of existential quantification that this assignment can be
extended by an assignment $\eta_{o}$ to the moved FF $t_o$ such that
$R^{(i)} \wedge \bigwedge_{l=1}^k(t_{o} \equiv t_{n_l})$
is satisfied by the resulting
$A^{(i)} = (\eta_1, \ldots, \eta_{m_i}, \eta_{n_1}, \ldots, \eta_{n_k}, \eta_{o}, \xi_1, \ldots, \xi_m)$.
$A^{(i)}$ defines a complete assignment to the FFs both for $M^{(i+1)}$ and $M^{(i)}$ and to the
primary inputs.
Since $A^{(i)}$ satisfies $R^{(i)}$, we can conclude from the induction hypothesis 
that $A^{(i)}$ satisfies $\cequiv^{(i)}$.
Since $A^{(i)}$ satisfies $\bigwedge_{l=1}^k(t_{o} \equiv t_{n_l})$,
we have $\eta_{n_1} = \eta_{o}, \ldots, \eta_{n_k} = \eta_{o}$ and thus
$A^{(i)}$ also satisfies $\cequiv^{(i+1)}$.
Since $\cequiv^{(i+1)}$ does not depend on $t_o$,
also $A^{(i+1)}$ satisfies $\cequiv^{(i+1)}$.

Finally, we consider the induction step for part (c).

Case~(1): 
We consider an arbitrary assignment 
$A^{(i+1)} = (\eta_1, \ldots, \eta_{m_i}, \eta_{n_1}, \ldots, \eta_{n_k})$ 
to the FFs in $M^{(i+1)}$ that satisfies $R^{(i+1)}$.
It follows from the semantics of existential quantification that this assignment can be
extended by an assignment $\eta_o$ to the moved FF $t_o$ such that
$R^{(i)} \wedge (t_o \equiv g(t_{n_1}, \ldots, t_{n_k}))$ is satisfied. 
We extend the assignment further by an assignment $(\xi_1, \ldots, \xi_m)$
to the primary input variables resulting in
$A^{(i)} = (\eta_1, \ldots, \eta_{m_i}, \eta_{n_1}, \ldots, \eta_{n_k}, \eta_o, \xi_1, \ldots, \xi_m)$.
$A^{(i)}$ defines a complete assignment to the FFs both for $M^{(i+1)}$ and $M^{(i)}$ and to the
primary inputs.
Let the next state assignment corresponding to $A^{(i)}$ in $M^{(i)}$ be
$A'^{(i)} := (\epsilon^{(i)}_1, \ldots, \epsilon^{(i)}_{m_i}, \epsilon^{(i)}_o)$
and let
$A'^{(i+1)} = (\epsilon^{(i+1)}_1, \ldots, \epsilon^{(i+1)}_{m_i}, \epsilon^{(i+1)}_{n_1}, \ldots, \epsilon^{(i+1)}_{n_k})$ be
the next state assignment corresponding to $A^{(i)}$, i.e., corresponding to
$(\eta_1, \ldots, \eta_{m_i}, \eta_{n_1}, \ldots, \eta_{n_k}, \xi_1, \ldots, \xi_m)$, in $M^{(i+1)}$.
From the induction hypothesis we can conclude that $A'^{(i)}$ satisfies $R^{(i)}$. We have to prove that $A'^{(i+1)}$ satisfies $R^{(i+1)}$.

For the common FFs $s_j$ ($1 \leq i \leq m_i$) of $M^{(i)}$ and
$M^{(i+1)}$ the transition functions $\delta^{(i+1)}_{s_j}$ of $M^{(i+1)}$
result from the transition functions $\delta^{(i)}_{s_j}$ of $M^{(i)}$ by
$\delta^{(i+1)}_{s_j} = \delta^{(i)}_{s_j}|_{t_o \leftarrow g(t_{n_1}, \ldots, t_{n_k})}$.
Since the primary input assignments $(\xi_1, \ldots, \xi_m)$ and the present state
assignments $(\eta_1, \ldots, \eta_{m_i})$ to the non-moved FFs are the same in $M^{(i)}$ and $M^{(i+1)}$ 
and $\eta_o$ has been chosen such that $\eta_o = g(\eta_{n_1}, \ldots, \eta_{n_k})$,
we have $\epsilon^{(i)}_j = \epsilon^{(i+1)}_j$.

For $t_o$ and $t_{n_1}, \ldots, t_{n_k}$ we cannot just assume that
$\delta^{(i)}_{t_o} = g(\delta^{(i+1)}_{t_{n_1}}, \ldots, \delta^{(i+1)}_{t_{n_k}})$, since
there may be cycles containing the gate $g$, such that the next state of $t_o$ depends on the
current state of $t_o$. Our argument has to be more precise:
Because of $\eta_o = g(\eta_{n_1}, \ldots, \eta_{n_k})$, the signal assignments
at the output of $t_o$ in $M^{(i)}$ and at the output of $g$ in $M^{(i+1)}$
are the same in the current state. 
Because of this and because the primary input assignments $(\xi_1, \ldots, \xi_m)$ and the present state
assignments $(\eta_1, \ldots, \eta_{m_i})$ to the non-moved FFs are the same in $M^{(i)}$ and $M^{(i+1)}$,
the values computed by $M^{(i)}$
at input $j$ ($1 \leq j \leq k$) of gate $g$ and by $M^{(i+1)}$ at the input of $t_{n_j}$ are the same.
(This holds even in the presence of cycles containing $g$ -- remember that due to the backward retiming rules,
there is no fanout between the output of gate $g$ and 
$t_o$ in $M^{(i)}$ and there is no fan-out between $t_{n_j}$ and the
input $j$ of $g$ ($1 \leq j \leq k$) in $M^{(i+1)}$, see also Fig.~\ref{fig:appendix}, Case~(1).)
For the next state values of 
$t_o$ in $M^{(i)}$ and $t_{n_j}$ ($1 \leq j \leq k$) in $M^{(i+1)}$
this immediately results in
$\epsilon^{(i)}_o = g(\epsilon^{(i+1)}_{n_1}, \ldots, \epsilon^{(i+1)}_{n_k})$.

Altogether, 
$(\epsilon^{(i)}_1, \ldots, \epsilon^{(i)}_{m_i}, 
\epsilon^{(i+1)}_{n_1}, \ldots, \epsilon^{(i+1)}_{n_k}, \epsilon^{(i)}_o)
= (\epsilon^{(i+1)}_1,$ \linebreak 
$\ldots, \epsilon^{(i+1)}_{m_i}, 
\epsilon^{(i+1)}_{n_1}, \ldots, \epsilon^{(i+1)}_{n_k}, \epsilon^{(i)}_o)
$
satisfies  
both $R^{(i)}$ and 
$(t_o$ \linebreak
$\equiv g(t_{n_1}, \ldots, t_{n_k}))$, i.e.,
$(\epsilon^{(i+1)}_1, \ldots, \epsilon^{(i+1)}_{m_i}, 
\epsilon^{(i+1)}_{n_1}, \ldots, \epsilon^{(i+1)}_{n_k})$
satisfies $R^{(i+1)} = \exists t_o [R^{(i)} \wedge (t_o \equiv g(t_{n_1}, \ldots,$ $t_{n_k}))]$. 

Cases (2), (3), and (4) are proven in a similar way:

Case~(2): We consider an arbitrary assignment 
$A^{(i+1)} = (\eta_1, \ldots, \eta_{m_i}, \eta_{n})$ 
to the FFs in $M^{(i+1)}$ that satisfies $R^{(i+1)}$.
It follows from the semantics of existential quantification that this assignment can be
extended by an assignment $(\eta_{o_1}, \ldots, \eta_{o_k})$ to the moved FFs 
$t_{o_1}, \ldots, t_{o_k}$ such that
$R^{(i)} \wedge (t_{n} \equiv g(t_{o_1}, \ldots, t_{o_k}))$ is satisfied. 
We extend the assignment further by an assignment $(\xi_1, \ldots, \xi_m)$
to the primary input variables resulting in
$A^{(i)} = (\eta_1, \ldots, \eta_{m_i}, \eta_{n}, \eta_{o_1}, \ldots, \eta_{o_k}, \xi_1, \ldots, \xi_m)$.
$A^{(i)}$ defines a complete assignment to the FFs both for $M^{(i+1)}$ and $M^{(i)}$ and to the
primary inputs.
Let the next state assignment corresponding to $A^{(i)}$ in $M^{(i)}$ be
$A'^{(i)} := (\epsilon^{(i)}_1, \ldots, \epsilon^{(i)}_{m_i}, \epsilon^{(i)}_{o_1}, \ldots, \epsilon^{(i)}_{o_k})$
and let
$A'^{(i+1)} = (\epsilon^{(i+1)}_1, \ldots, \epsilon^{(i+1)}_{m_i}, \epsilon^{(i+1)}_{n})$ be
the next state assignment corresponding to $A^{(i)}$, i.e., corresponding to
$(\eta_1, \ldots, \eta_{m_i}, \eta_{n}, \xi_1, \ldots, \xi_m)$, in $M^{(i+1)}$.
From the induction hypothesis we can conclude that $A'^{(i)}$ satisfies $R^{(i)}$. We have to prove that $A'^{(i+1)}$ satisfies $R^{(i+1)}$.

For the common FFs $s_j$ ($1 \leq i \leq m_i$) of $M^{(i)}$ and
$M^{(i+1)}$ we have
$\delta^{(i)}_{s_j} = \delta^{(i+1)}_{s_j}|_{t_n \leftarrow g(t_{o_1}, \ldots, t_{o_k})}$.
Since the primary input assignments $(\xi_1, \ldots, \xi_m)$ and the present state
assignments $(\eta_1, \ldots, \eta_{m_i})$ to the non-moved FFs are the same in $M^{(i)}$ and $M^{(i+1)}$ 
and $\eta_{o_1}, \ldots, \eta_{o_k}$ have been chosen such that $\eta_n = g(\eta_{o_1}, \ldots, \eta_{o_k})$,
we have $\epsilon^{(i)}_j = \epsilon^{(i+1)}_j$.

Again, for $t_n$ and $t_{o_1}, \ldots, t_{o_k}$ we cannot just assume that
$\delta^{(i+1)}_{t_n} = g(\delta^{(i)}_{t_{o_1}}, \ldots, \delta^{(i)}_{t_{o_k}})$, since
there may be cycles containing the gate $g$, such that the next state of $t_n$ depends on the
current state of $t_n$. 
Because of $\eta_n = g(\eta_{o_1}, \ldots, \eta_{o_k})$, the signal assignments
at the output of $t_n$ in $M^{(i+1)}$ and at the output of $g$ in $M^{(i)}$
are the same in the current state. 
Because of this and because the primary input assignments $(\xi_1, \ldots, \xi_m)$ and the present state
assignments $(\eta_1, \ldots, \eta_{m_i})$ to the non-moved FFs are the same in $M^{(i)}$ and $M^{(i+1)}$,
the values computed by $M^{(i+1)}$
at input $j$ ($1 \leq j \leq k$) of gate $g$ and by $M^{(i)}$ at the input of $t_{o_j}$ are the same.
(This holds even in the presence of cycles containing $g$ -- remember that due to the backward retiming rules,
there is no fanout between the output of gate $g$ and 
$t_n$ in $M^{(i+1)}$ and there is no fan-out between $t_{o_j}$ and the
input $j$ of $g$ ($1 \leq j \leq k$) in $M^{(i)}$, see also Fig.~\ref{fig:appendix}, Case~(2).)
For the next state values of 
$t_n$ in $M^{(i+1)}$ and $t_{o_j}$ ($1 \leq j \leq k$) in $M^{(i)}$
this immediately results in
$\epsilon^{(i+1)}_n = g(\epsilon^{(i)}_{o_1}, \ldots, \epsilon^{(i)}_{o_k})$.

Altogether, 
$(\epsilon^{(i)}_1, \ldots, \epsilon^{(i)}_{m_i}, 
\epsilon^{(i+1)}_{n}, \epsilon^{(i)}_{o_1}, \ldots, \epsilon^{(i)}_{o_k})
= (\epsilon^{(i+1)}_1, \ldots, \epsilon^{(i+1)}_{m_i}, \epsilon^{(i+1)}_{n}, \epsilon^{(i)}_{o_1}, \ldots, \epsilon^{(i)}_{o_k})$
satisfies  
both $R^{(i)}$ and $(t_{n} \equiv g(t_{o_1}, \ldots, t_{o_k}))$,  i.e.,
$(\epsilon^{(i+1)}_1, \ldots, \epsilon^{(i+1)}_{m_i}, \epsilon^{(i+1)}_{n})$
satisfies 
$R^{(i+1)} = \exists t_{o_1}, \ldots, t_{o_k} [R^{(i)} \wedge (t_{n} \equiv g(t_{o_1}, \ldots, t_{o_k}))]$.

Case~(3): We consider an arbitrary assignment 
$A^{(i+1)} = (\eta_1, \ldots, \eta_{m_i}, \eta_{n})$ 
to the FFs in $M^{(i+1)}$ that satisfies $R^{(i+1)}$.
It follows from the semantics of existential quantification that this assignment can be
extended by an assignment $(\eta_{o_1}, \ldots, \eta_{o_k})$ to the moved FFs 
$t_{o_1}, \ldots, t_{o_k}$ such that
$R^{(i)} \wedge \bigwedge_{l=1}^k(t_{o_l} \equiv t_{n})$
is satisfied. 
We extend the assignment further by an assignment $(\xi_1, \ldots, \xi_m)$
to the primary input variables resulting in
$A^{(i)} = (\eta_1, \ldots, \eta_{m_i}, \eta_{n}, \eta_{o_1}, \ldots, \eta_{o_k}, \xi_1, \ldots, \xi_m)$.
$A^{(i)}$ defines a complete assignment to the FFs both for $M^{(i+1)}$ and $M^{(i)}$ and to the
primary inputs.
Let the next state assignment corresponding to $A^{(i)}$ in $M^{(i)}$ be
$A'^{(i)} := (\epsilon^{(i)}_1, \ldots, \epsilon^{(i)}_{m_i}, \epsilon^{(i)}_{o_1}, \ldots, \epsilon^{(i)}_{o_k})$
and let
$A'^{(i+1)} = (\epsilon^{(i+1)}_1, \ldots, \epsilon^{(i+1)}_{m_i}, \epsilon^{(i+1)}_{n})$ be
the next state assignment corresponding to $A^{(i)}$, i.e., corresponding to
$(\eta_1, \ldots, \eta_{m_i}, \eta_{n}, \xi_1, \ldots, \xi_m)$, in $M^{(i+1)}$.
From the induction hypothesis we can conclude that $A'^{(i)}$ satisfies $R^{(i)}$. We have to prove that $A'^{(i+1)}$ satisfies $R^{(i+1)}$.

For the common FFs $s_j$ ($1 \leq i \leq m_i$) of $M^{(i)}$ and
$M^{(i+1)}$ we have
$\delta^{(i+1)}_{s_j} = \delta^{(i)}_{s_j}|_{t_{o_1} \leftarrow t_n, \ldots, t_{o_k} \leftarrow t_n}$.
Since the primary input assignments $(\xi_1, \ldots, \xi_m)$ and the present state
assignments $(\eta_1, \ldots, \eta_{m_i})$ to the non-moved FFs are the same in $M^{(i)}$ and $M^{(i+1)}$ 
and $\eta_{o_1}, \ldots, \eta_{o_k}$ have been chosen with $\eta_{o_1} = \eta_{n}, \ldots, \eta_{o_k} = \eta_{n}$,
we have $\epsilon^{(i)}_j = \epsilon^{(i+1)}_j$.

We have
$\delta^{(i+1)}_{t_n} = \delta^{(i)}_{t_{o_j}}|_{t_{o_1} \leftarrow t_n, \ldots, t_{o_k} \leftarrow t_n)}$
for all $1 \leq j \leq k$ (this holds even if there are cycles in $M^{(i)}$ containing some
FFs $t_{o_1} \ldots, t_{o_k}$).
Because of $\eta_{o_1} = \eta_{n}, \ldots, \eta_{o_k} = \eta_{n}$ and because
the primary input assignments $(\xi_1, \ldots, \xi_m)$ and the present state
assignments $(\eta_1, \ldots, \eta_{m_i})$ to the non-moved FFs are the same in $M^{(i)}$ and $M^{(i+1)}$,
we have $\epsilon^{(i+1)}_n = \epsilon^{(i)}_{o_j}$ for all $1 \leq j \leq k$.

Altogether, 
$(\epsilon^{(i)}_1, \ldots, \epsilon^{(i)}_{m_i}, 
\epsilon^{(i+1)}_{n}, \epsilon^{(i)}_{o_1}, \ldots, \epsilon^{(i)}_{o_k})
= (\epsilon^{(i+1)}_1, \ldots, \epsilon^{(i+1)}_{m_i}, \epsilon^{(i+1)}_{n}, \epsilon^{(i)}_{o_1}, \ldots, \epsilon^{(i)}_{o_k})$
satisfies  
both $R^{(i)}$ and $\bigwedge_{l=1}^k(t_{o_l} \equiv t_{n})$, i.e.,
$(\epsilon^{(i+1)}_1, \ldots, \epsilon^{(i+1)}_{m_i}, \epsilon^{(i+1)}_{n})$
satisfies 
$R^{(i+1)} = \exists t_{o_1}, \ldots, t_{o_k} [R^{(i)} \wedge \bigwedge_{l=1}^k(t_{o_l} \equiv t_{n})]$.

Case~(4): We consider an arbitrary assignment 
$A^{(i+1)} = (\eta_1, \ldots, \eta_{m_i}, \eta_{n_1}, \ldots, \eta_{n_k})$ 
to the FFs in $M^{(i+1)}$ that satisfies $R^{(i+1)}$.
It follows from the semantics of existential quantification that this assignment can be
extended by an assignment $\eta_{o}$ to the moved FF
$t_{o}$ such that
$R^{(i)} \wedge \bigwedge_{l=1}^k(t_{o} \equiv t_{n_l})$
is satisfied. 
We extend the assignment further by an assignment $(\xi_1, \ldots, \xi_m)$
to the primary input variables resulting in
$A^{(i)} = (\eta_1, \ldots, \eta_{m_i}, \eta_{n_1}, \ldots, \eta_{n_k}, \eta_{o}, \xi_1, \ldots, \xi_m)$.
$A^{(i)}$ defines a complete assignment to the FFs both for $M^{(i+1)}$ and $M^{(i)}$ and to the
primary inputs.
Let the next state assignment corresponding to $A^{(i)}$ in $M^{(i)}$ be
$A'^{(i)} := (\epsilon^{(i)}_1, \ldots, \epsilon^{(i)}_{m_i}, \epsilon^{(i)}_{o})$
and let
$A'^{(i+1)} = (\epsilon^{(i+1)}_1, \ldots, \epsilon^{(i+1)}_{m_i}, \epsilon^{(i+1)}_{n_1}, \ldots, \epsilon^{(i+1)}_{n_k})$ be
the next state assignment corresponding to $A^{(i)}$, i.e., corresponding to
$(\eta_1, \ldots, \eta_{m_i}, \eta_{n_1}, \ldots, \eta_{n_k}, \xi_1, \ldots, \xi_m)$, in $M^{(i+1)}$.
From the induction hypothesis we can conclude that $A'^{(i)}$ satisfies $R^{(i)}$. We have to prove that $A'^{(i+1)}$ satisfies $R^{(i+1)}$.

For the common FFs $s_j$ ($1 \leq i \leq m_i$) of $M^{(i)}$ and
$M^{(i+1)}$ we have
$\delta^{(i)}_{s_j} = \delta^{(i+1)}_{s_j}|_{t_{n_1} \leftarrow t_o, \ldots, t_{n_k} \leftarrow t_o}$.
Since the primary input assignments $(\xi_1, \ldots, \xi_m)$ and the present state
assignments $(\eta_1, \ldots, \eta_{m_i})$ to the non-moved FFs are the same in $M^{(i)}$ and $M^{(i+1)}$ 
and $\eta_{o}$ has been chosen with $\eta_{o} = \eta_{n_1}, \ldots, \eta_{o} = \eta_{n_k}$,
we have $\epsilon^{(i)}_j = \epsilon^{(i+1)}_j$.

We have
$\delta^{(i)}_{t_o} = \delta^{(i+1)}_{t_{n_j}}|_{t_{n_1} \leftarrow t_o, \ldots, t_{n_k} \leftarrow t_o}$
for all $1 \leq j \leq k$ (this holds even if there are cycles in $M^{(i)}$ containing some
FF $t_{o}$).
Because of $\eta_{o} = \eta_{n_1}, \ldots, \eta_{o} = \eta_{n_k}$ and because
the primary input assignments $(\xi_1, \ldots, \xi_m)$ and the present state
assignments $(\eta_1, \ldots, \eta_{m_i})$ to the non-moved FFs are the same in $M^{(i)}$ and $M^{(i+1)}$,
we have $\epsilon^{(i)}_o = \epsilon^{(i+1)}_{n_j}$ for all $1 \leq j \leq k$.

Altogether, 
$(\epsilon^{(i)}_1, \ldots, \epsilon^{(i)}_{m_i}, 
\epsilon^{(i+1)}_{n_1}, \ldots, \epsilon^{(i+1)}_{n_k}, \epsilon^{(i)}_{o})
= (\epsilon^{(i+1)}_1,$ \linebreak$\ldots, \epsilon^{(i+1)}_{m_i}, \epsilon^{(i+1)}_{n_1}, \ldots, \epsilon^{(i+1)}_{n_k}, \epsilon^{(i)}_{o})$
satisfies  
both $R^{(i)}$ and $\bigwedge_{l=1}^k(t_{o_l} \equiv t_{n})$, i.e.,
$(\epsilon^{(i+1)}_1, \ldots, \epsilon^{(i+1)}_{m_i}, \epsilon^{(i+1)}_{n_1}, \ldots, \epsilon^{(i+1)}_{n_k})$
satisfies 
$R^{(i+1)} = \exists t_{o} [R^{(i)} \wedge \bigwedge_{l=1}^k(t_{o} \equiv t_{n_l})]$.
\end{proof} 

\reftheorem{ret_inv} implies in particular that $R^{(n)}$ is an inductive invariant of  
$M^{(n)}$, if $R^{(0)}$ is an inductive invariant of $M^{(0)}$.

Finally, we see that under certain conditions the existential quantifications within 
the retiming invariants can be omitted. This observation is based on the following lemma:
\begin{lemma}
\label{lemma:substitution}
If a predicate $R$ has the form $R = R' \wedge (v \equiv T)$ with
predicates $R'$ and $T$ where $T$ does not depend on $v$,
then $\exists v R$ and $R'|_{v \leftarrow T}$ are logically equivalent.
\end{lemma}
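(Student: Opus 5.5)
The plan is to prove the equivalence semantically, by comparing the two predicates on an arbitrary assignment $\alpha$ to all variables other than $v$. Since $T$ does not depend on $v$, $\alpha$ determines a unique value $\tau := T(\alpha)$. Also $R'|_{v \leftarrow T}$ does not depend on $v$: all occurrences of $v$ in $R'$ have been replaced by $T$, which is free of $v$. So both $\exists v R$ and $R'|_{v \leftarrow T}$ are predicates over the same variables, and it suffices to show that they take the same value under every such $\alpha$.

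The key tool is the substitution property of Boolean function composition: for any predicate $R'$ and any $v$-free $T$, the value of $R'|_{v \leftarrow T}$ under $\alpha$ equals the value of $R'$ under the extended assignment $\alpha[v \mapsto T(\alpha)]$. This is the same composition semantics the paper already uses in the proof of \reftheorem{ret_inv}, part (b), for expressions such as $\cequiv^{(i)}|_{t_o \leftarrow g(t_{n_1}, \ldots, t_{n_k})}$.

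\textbf{Direction $\exists v R \Rightarrow R'|_{v \leftarrow T}$.} Suppose $\alpha$ satisfies $\exists v R$. Then there is a value $b \in \{0,1\}$ such that $\alpha[v \mapsto b]$ satisfies $R' \wedge (v \equiv T)$. The conjunct $v \equiv T$ forces $b = T(\alpha[v \mapsto b])$, and because $T$ ignores $v$ this is just $b = \tau$. Hence $\alpha[v \mapsto \tau]$ satisfies $R'$, and by the substitution property $\alpha$ satisfies $R'|_{v \leftarrow T}$.

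\textbf{Direction $R'|_{v \leftarrow T} \Rightarrow \exists v R$.} Suppose $\alpha$ satisfies $R'|_{v \leftarrow T}$. By the substitution property, $\alpha[v \mapsto \tau]$ satisfies $R'$. This assignment also satisfies $v \equiv T$, since $T$ evaluates to $\tau$ independently of $v$. So $\alpha[v \mapsto \tau]$ satisfies $R$, and choosing $v = \tau$ as the witness shows that $\alpha$ satisfies $\exists v R$.

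The only real obstacle is the step where the hypothesis on $T$ is used. If $T$ depended on $v$, then $v \equiv T$ could admit no solution, or two, and the substitution would no longer be well-defined as a one-step replacement. I would therefore make this step explicit, noting that $T$'s independence of $v$ is exactly what yields both the uniqueness of the witness and the validity of the composition. Everything else is routine unfolding of definitions.
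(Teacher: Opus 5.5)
Your proof is correct. The pointwise argument over assignments $\alpha$ to the variables other than $v$, combined with the composition identity $(R'|_{v\leftarrow T})(\alpha) = R'(\alpha[v\mapsto T(\alpha)])$, gives both directions. The hypothesis that $T$ is $v$-free is used exactly where it is needed: to force the witness to be $b=\tau$, and to make $v \equiv T$ hold at $\alpha[v\mapsto\tau]$.

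The paper itself gives no proof of this lemma; it states it and immediately applies it to Cases (1), (3), and (4). So there is no paper route to compare against, and your argument supplies what the paper leaves implicit. An equally short algebraic alternative is Shannon expansion:
\[
\exists v R = R|_{v\leftarrow 0} \vee R|_{v\leftarrow 1} = (R'|_{v\leftarrow 0}\wedge\neg T)\vee(R'|_{v\leftarrow 1}\wedge T) = R'|_{v\leftarrow T}.
\]
Here $v$-independence of $T$ enters only through $T|_{v\leftarrow b} = T$. The last equality is just the expansion of $R'$ in $v$ with $T$ substituted for $v$.

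One small correction to your closing remark. If $T$ depended on $v$, the substitution $R'|_{v\leftarrow T}$ would still be perfectly well-defined as function composition. What breaks is the equivalence itself, and the result may then still depend on $v$. For example, $R' = v$ and $T = \neg v$ give $\exists v R = 0$ but $R'|_{v\leftarrow T} = \neg v$.
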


Lemma~\ref{lemma:substitution} immediately implies that all existential quantifications
for Cases (1), (3), and (4) of the definition of $R^{(i+1)}$ from $R^{(i)}$ above can be
replaced by substitutions. 
E.g. for backward retiming over a gate $g$ from Case (1)
$\exists t_o [R^{(i)} \wedge (t_o \equiv g(t_{n_1}, \ldots,$ $t_{n_k}))]$
can be replaced by
$R^{(i)}|_{t_o \leftarrow g(t_{n_1}, \ldots, t_{n_k})}$.
In Case (2), the question whether existential quantification can be
replaced by substitution depends on the concrete form of the predicate $R^{(i)}$.

If the preprocessing step uses forward retiming only, e.g., then the
reverse series of FF movements uses backward retiming only and thus it is
guaranteed that all existential quantifications can be replaced by substitutions.

Note that \reftheorem{ret_inv} is general enough that it can be applied to
retiming invariants in the sense of \cite{mneimneh2003reverse} as well.
We just have to define an original miter $M^{(0)}$ as a miter between $C^{\text{golden}}$
and a copy $C'^{\text{golden}}$. If $C^{\text{golden}}$ has FFs $s_1, \ldots, s_n$ and
its copy has FFs $s'_1, \ldots, s'_n$, then the original inductive invariant is
$R^{(0)} := \bigwedge_{i=1}^n (s_i \equiv s'_i)$. It is clear that
$R^{(0)}$ is an inductive invariant of $M^{(0)}$.
The series of FF movements is now chosen as a series of FF movements transforming 
the \emph{copy} of \golden{} into \revised{} . $M^{(n)}$  is then a miter between
\golden{} into \revised{} and, according to \reftheorem{ret_inv}, $R^{(n)}$ is 
an inductive invariant of $M^{(n)}$.

\subsection{Simple Path Constraints}\labelsec{simple_path}
We present a trivial example that showcases why simple path (also called unique state) constraints can become 
necessary in order to prove signal correspondence with $k$-induction.

\begin{figure}[tb]
    \pgfdeclarehorizontalshading{redgreen}{100bp}{
  color(0bp)=(red!20);
  color(50bp)=(red!20);
  color(50bp)=(green!20);
  color(100bp)=(green!20)
}
\pgfdeclarehorizontalshading{redblue}{100bp}{
  color(0bp)=(red!20);
  color(50bp)=(red!20);
  color(50bp)=(blue!20);
  color(100bp)=(blue!20)
}

\hfill\textbf{\golden{}}

\tikzmath{\scale = 1.3; \fdist = \scale ; \fonedist = \fdist * 1.8; \ftwodist = \fonedist * 2.1;\idist = \fdist / 2; \ffwidth = \scale * 4; \ffheight = \scale * 5.5;}
\begin{tikzpicture}[>=latex, thick]
  \node (i1) {};
  \node[below=\idist cm of i1] (i2) {$x_1$};
  \node[below=\idist cm of i2] (i3) {$x_2$};

  \node[and gate US, draw, logic gate inputs=nn, right=\fonedist cm of i1, yshift=-0.4cm, minimum size=8mm] (f1) {G};
  \draw (i2) -- ++(2,0) |- (f1.input 2);

  \node[and gate US, draw, logic gate inputs=nn, right=\ftwodist cm of i2, minimum size=8mm] (f2) {H};
  \draw (f1) -- node[pos=0.5, draw, rectangle, minimum width=\ffwidth mm, minimum height=\ffheight mm, fill=red!20]{{ $s_1$}}
                    node[pos=1, draw, circle, minimum size=1 mm, inner sep=0pt, fill=black](fanout){} ++(2,0) |- (f2.input 1);
  \draw (i3) -- ++(4,0) |- (f2.input 2);
  \node[not gate US, draw, logic gate inputs=n, minimum size=4mm, left=1cm of f1.input 1] (notgate) {};
  \draw (fanout) -- ++(0,0.8) -| (i1.east) -- ++(0,0) |- (notgate.input);
  \draw (notgate.output) -- ++(0.5,0)  |- (f1.input 1);

  \node[right=\fdist cm of f2] (o) {};
  \draw (f2.output) -- node[pos=0.5, draw, rectangle, minimum width=\ffwidth mm, minimum height=\ffheight mm, fill=blue!20]{{ $s_2$}} (o);

\end{tikzpicture}

\hfill\textbf{Intermediate Step}

\tikzmath{\scale = 1.3; \fdist = \scale ; \fonedist = \fdist * 1.8; \ftwodist = \fonedist * 2.1;\idist = \fdist / 2; \ffwidth = \scale * 4; \ffheight = \scale * 5.5;}
\begin{tikzpicture}[>=latex, thick, color=gray]
  \node (i1) {};
  \node[below=\idist cm of i1] (i2) {$x_1$};
  \node[below=\idist cm of i2] (i3) {$x_2$};

  \node[and gate US, draw, logic gate inputs=nn, right=\fonedist cm of i1, yshift=-0.4cm, minimum size=8mm] (f1) {G};
  \draw (i2) -- node[pos=0.3, draw, rectangle, minimum width=\ffwidth mm, minimum height=\ffheight mm, fill=red!20]{{ $\revisedff{s}_1$}} ++(2,0) |- (f1.input 2);

  \node[and gate US, draw, logic gate inputs=nn, right=\ftwodist cm of i2, minimum size=8mm] (f2) {H};
  \draw (f1.output) -- node[pos=1, draw, circle, minimum size=1 mm, inner sep=0pt, fill=gray](fanout){} ++(1.4,0) |- (f2.input 1);
  \draw (i3) -- ++(4,0) |- (f2.input 2);
  \node[not gate US, draw, logic gate inputs=n, minimum size=4mm, left=1cm of f1.input 1] (notgate) {};
  \draw (fanout) -- ++(0,0.8) -| node[pos=0.1, draw, rectangle, minimum width=\ffwidth mm, minimum height=\ffheight mm, fill=red!20]{{ $\revisedff{s}^{\star}_1$}} (i1.east) -- ++(0,0) |-  (notgate.input);
  \draw (notgate.output) -- ++(0.5,0)  |- (f1.input 1);

  \node[right=\fdist cm of f2] (o) {};
  \draw (f2.output) -- node[pos=0.5, draw, rectangle, minimum width=\ffwidth mm, minimum height=\ffheight mm, fill=blue!20]{{ $s_2$}} (o);

\end{tikzpicture}

\hfill\textbf{\revised{}}

\begin{tikzpicture}[>=latex, thick]
  \node (i1) {};
  \node[below=\idist cm of i1] (i2) {$x_1$};
  \node[below=\idist cm of i2] (i3) {$x_2$};

  \node[and gate US, draw, logic gate inputs=nn, right=\fonedist cm of i1, yshift=-0.4cm, minimum size=8mm] (f1) {G};
  \draw (i2) -- node[pos=0.3, draw, rectangle, minimum width=\ffwidth mm, minimum height=\ffheight mm, fill=red!20]{{ $\revisedff{s}_1$}}
                    node[pos=0.75, draw, rectangle, minimum width=\ffwidth mm, minimum height=\ffheight mm, shading=redblue]{{ $\revisedff{s}_4$}}  ++(2.2,0) |- (f1.input 2);

  \node[and gate US, draw, logic gate inputs=nn, right=\ftwodist cm of i2, minimum size=8mm] (f2) {H};
  \draw (f1.output) -- node[pos=1, draw, circle, minimum size=1 mm, inner sep=0pt, fill=black](fanout){} ++(1.4,0) |- (f2.input 1);
  \draw (i3) -- node[pos=0.2, draw, rectangle, minimum width=\ffwidth mm, minimum height=\ffheight mm, fill=blue!20]{{ $\revisedff{s}_2$}} ++(4,0) |- (f2.input 2);
  \node[not gate US, draw, logic gate inputs=n, minimum size=4mm, left=1cm of f1.input 1] (notgate) {};

  \draw (fanout) -- ++(0,0.8) -| node[pos=0.1, draw, rectangle, minimum width=\ffwidth mm, minimum height=\ffheight mm, shading=redblue]{{ $\revisedff{s}_3$}} (i1.east) -- ++(0,0) |-  (notgate.input);
  \draw (notgate.output) -- ++(0.5,0)  |- (f1.input 1);

  \node[right=\fdist cm of f2] (o) {};
  \draw (f2.output) -- (o);

\end{tikzpicture}
    \caption{A backward retimed circuit.\labelfig{simple_path}}
    \vspace*{-5mm}
\end{figure}

\begin{example}\labelexample{simple_path}
We consider the two \emph{sequentially equivalent} circuits in \reffig{simple_path}. The circuit \revised{} results from \golden{} by only retiming.
We retime $s_1$ backward over $G$ and the inverter. 
Thus, an intermediate flip-flop 
$\revisedff{s}^{\star}_1$ results.
Then we retime $s_2$ over $H$. As a result, we can retime $\revisedff{s}^{\star}_1$, together with the result of $s_2$ that was moved over $H$, over $G$ 
(and the inverter) again.
Then we arrive at \revised{}.
We assume an all-zero initial state and further assume that the correspondence $s_2 = \neg \revisedff{s}_3 \wedge \revisedff{s}_4 \wedge \revisedff{s}_2$ has to be proved. 

Now the interesting observation is the following:

There is an unreachable state $u = (s_1 = 0, s_2 = 0, \revisedff{s}_1 = 1, \revisedff{s}_2 = 0, \revisedff{s}_3 = 1, \revisedff{s}_4 = 1)$ that obviously satisfies the
equivalence of both outputs (both zero).

Starting from $u$ we arrive with input  $(x_1 = 1, x_2 = 0)$
at
$u' = (s_1 = 1, s_2 = 0, \revisedff{s}_1 = 1, \revisedff{s}_2 = 0, \revisedff{s}_3 = 0, \revisedff{s}_4 = 1)$
where both outputs remain zero.
From $u'$ we come back to $u$ with input $(x_1 = 1, x_2 = 0)$ again, i.e.,
there is an infinite (unreachable) loop of states with equivalent outputs.

If we start from $u$ and apply the inputs $x_1 = 0, x_2 = 1$ however, we arrive at the state
$(s_1 = 0, s_2 = 0, \revisedff{s}_1 = 0, \revisedff{s}_2 = 1, \revisedff{s}_3 = 0, \revisedff{s}_4 = 1)$ which violates equivalence on the outputs 
(\golden{} outputs 0 and \revised{} outputs 1).

Since there is no reachable state with non-equivalent outputs, the circuits are equivalent.
However, $k$-induction without simple path constraints will no terminate, since there is
an arbitrarily long chain of states with equivalent outputs which is followed by a state
with non-equivalent outputs.
Thus, simple path constraints are needed for $k$-induction. 
\end{example} 

\subsection{Certification}
Proofs of equivalence found by IC3 are inductive invariants.
Therefore, it is trivial to use a proof checker like \emph{Certifaiger} \cite{DBLP:conf/cav/FroleyksYPBH25} in the standard case of \texttt{ic3-sim}.
There is no need to certify multiple $k$-induction proofs or intermediate resynthesis steps.

In the \texttt{ic3-sim-F} case however, with maximal forward retiming, we can transform $\Inv^{pp}$
of the preprocessed miter $M^{pp}$ into an invariant $\Inv^{\text{orig}}$ of the original 
miter $M^{\text{orig}}$ (\refsec{retiming_preproc}).

In \refsec{certification} we demonstrate that all proofs could be checked with acceptable time / memory overhead. 
\section{Experimental Results}
\labelsec{results}

For our case study, we use a non-trivial collection of well known ISCAS'89 and ITC'99 \cite{DBLP:journals/dt/CornoRS00} benchmarks as well as 
open cores that were considered, e.g., at IWLS'05.
All synthesis operations, including retiming, are performed by ABC\footnote{commit 9478c172881f7c392493420f639a940fdfeb9a00} \cite{DBLP:conf/cav/BraytonM10}.
As the underlying SAT 
solver of \texttt{ic3-sim} we use 
\textsc{MiniSat} \cite{DBLP:conf/sat/EenS03}. 
Experiments with CaDiCaL \cite{DBLP:conf/cav/BiereFFFFP24} showed similar, but slightly longer run times.
All of our implementations and retimed / resynthesized AIGER \cite{Biere-FMV-TR-11-2} benchmarks can be found on github under \url{https://github.com/tobiasseufert/SEC-ic3-sim}. 

As a certificate format / checker, we use \emph{Certifaiger} \cite{DBLP:conf/cav/FroleyksYPBH25}.

We consider two of our
implementations: 
\texttt{ic3-sim} (\refalg{ic3_spec_invar}) detects the \emph{retiming depth} of the problem and performs
\emph{multi-timeframe} rarity simulation \cite{mishchenko2012using},
then it expands candidates for delayed equivalences via virtual forward retiming.
\texttt{ic3-sim-F} applies most forward retiming to the whole miter instead 
of lazy \emph{virtual} forward retiming. Therefore, only basic single time frame rarity
simulation is used.

For comparison, we consider the certifying general purpose model checker
rIC3 \cite{DBLP:conf/cav/SuYCBH25}. 

As a reference, we consider ABC's \emph{non-certifying} sequential 
equivalence checker \texttt{dprove}. 

As an additional reference, we present the results of the standard IC3 version that our 
implementations of \texttt{ic3-sim} and \texttt{ic3-sim-F} are based on. In the following
we denote our standard IC3 with \texttt{ic3}.

If not otherwise stated, we use a \emph{wall clock} timeout of 1\,h.
Memory limit is 4\,GB per thread. rIC3 spawns a total of 16 threads and is therefore allowed 
approximately 16 CPU-hours and 64\,GB of memory. 
Nevertheless, we report the actual CPU usage when indicating CPU times.

\paragraph*{\textbf{rIC3}}
is currently the strongest available \emph{certifying} hardware model checker.
It runs multiple IC3 variants concurrently. Some explicitly target \emph{equivalence checking}
problems by including internal signals in the state space \cite{DBLP:conf/fmcad/DurejaGIV21}.
Furthermore, it uses $k$-induction \cite{DBLP:conf/fmcad/SheeranSS00,DBLP:conf/fmcad/BjesseC00} but \emph{without}
simple-path constraints\footnote{Version 1.5.2, commit 17ef2d4356d2765cda432dbfbbd13773378c3e67, is currently not able to certify results under simple-path constraints.}. 

rIC3's $k$-induction engine runs Kissat \cite{kissat} with congruence closure \cite{DBLP:conf/sat/BiereFFF24}, which is well suited to
problems with many equivalent signals, such as sequential equivalence checking.

In total, rIC3 spawns 16 threads; our approach uses only one.

\paragraph*{\textbf{ABC's \texttt{dprove}}}
is a \emph{non-certifying} equivalence-checking engine.
It builds on the industrial-strength ABC tool \cite{DBLP:conf/cav/BraytonM10} and uses a whole
(single-threaded) portfolio of different techniques.
Its pre- and inprocessing uses a combination of maximal forward retiming,
minimal-area retiming, and logic-synthesis steps to simplify the transition relation,
which yields greatly simplified sequential equivalence checking problems. This is combined with induction queries
that incrementally unroll the transition relation (similar to $k$-induction without
unique-state constraints) to prove pairs of signals equivalent. Eventually, if the problem
remains unsolved, the simplified equivalence-checking problem is forwarded to interpolation-based model
checking or general IC3.

Certificates for \texttt{dprove}'s workflow are non-trivial, if not infeasible, with the current state of the art.

We nevertheless keep the comparison, even though \texttt{dprove} \emph{is not able to emit} certificates.

\subsection{Experiments}

\paragraph{\textbf{Retiming and Resynthesis Verification}}\labelsec{ret_res_std}
We start by comparing the \emph{whole portfolio} of rIC3 with 
\texttt{ic3-sim} on the sequential miter of the specification with \emph{only retimed}
and \emph{retimed and combinationally resynthesized} benchmarks, respectively. We 
apply the following abc commands:
\texttt{retime;dc2;rwsat;resyn}.

In the following, we give results for the version of our tool that we consider to be best suited
for the given problem set. For instance, plain retiming tasks are a good fit
for \texttt{ic3-sim-F}. Adding combinational resynthesis though may hurt maximal forward
retiming, therefore we use targeted virtual forward retiming, denoted simply by \texttt{ic3-sim}.
Also a \emph{cutoff = 0} (\refsec{ssweep_ic3}) is beneficial if we expect that a 
retiming invariant only with equivalences between FF output and signals (including signals from virtual forward retiming)
exists.

For retiming and resynthesis, we also consider \texttt{dprove} as a reference to showcase that not all
instances can be solved by the (to the best of our knowledge) best openly available sequential equivalence
checker. 

\begin{table*}[tb]
\begin{footnotesize}
\begin{center}
\setlength{\tabcolsep}{1pt}
\renewcommand{\arraystretch}{0.88}
\caption{Experimental results. CPU time in $s$. Each experiment is also performed on a \emph{pipelined} version of the circuits.\labeltab{results_merge}}
\begin{tabular}{l|rrr|rrrr|rrrr|rrrr}
  & \multicolumn{3}{c|}{\emph{only retiming}} & \multicolumn{4}{c|}{\emph{only retiming \textbf{pipelined}}} & \multicolumn{4}{c|}{\emph{retiming + comb. resyn.}} & \multicolumn{4}{c}{\parbox{32mm}{\centering\emph{retiming + comb. resyn.}\\\emph{\textbf{pipelined}}}} \\
\hline
\textbf{Benchmark} & \mcl{\textbf{ic3}} & \mcl{\textbf{rIC3}} & \mclv{\textbf{ic3-sim-F}}
  & \mcl{\textbf{ic3}} & \mcl{\textbf{rIC3}} & \mcl{\textbf{ic3-sim-F}} & \mclv{\parbox{14mm}{\centering\textbf{rIC3-kind}\\simple-path}}
  & \mcl{\textbf{ic3}} & \mcl{\textbf{rIC3}} & \mcl{\textbf{ic3-sim}} & \mclv{\parbox{12mm}{\centering{dprove} \\ \textbf{(uncert.)}}}
  & \mcl{\textbf{ic3}} & \mcl{\textbf{rIC3}} & \mcl{\textbf{ic3-sim}} & \mcl{\parbox{12mm}{\centering{dprove} \\ \textbf{(uncert.)}}} \\
\hline
aes\_core         & 901.99 & 5834.57    & 14.02   & --      & --           & 19.71         & -- & --      & 33684.15 & 18.39        & (4.93)      & --     & --       & 32.83    & (3.76)  \\
b12               & 798.17 & 5101.2     & 0.4     & --      & --           & 2             & -- & 1313.83 & 6318.04  & 1.12         & (10.35)     & --     & --       & 56.08    & (0.2)   \\
b13               & 78.46  & 97.18      & 0.13    & --      & 127.93       & 0.23          & -- & 25.22   & 163.64   & 0.2          & (52.32)     & --     & 120.96   & 1.22     & (--)    \\
b14               & 182.92 & 1299.12    & 1.21    & --      & 10326.3      & 5.31          & -- & 86.84   & 1480.49  & 4.56         & (1.03)      & --     & 8716.29  & 42.23    & (0.44)  \\
b15               & --     & --         & 4.6     & --      & --           & 15.3          & -- & --      & --       & 64.26        & (6.07)      & --     & --       & 299.27   & (3.89)  \\
b17               & --     & --         & 47.19   & --      & --           & 115.87        & -- & --      & --       & 1294.53      & (11.06)     & --     & --       & 2785.12  & (6.84)  \\
b18               & --     & 31911.97   & 1299.77 & --      & 49340.68     & 2698.73       & -- & --      & 44842.86 & --           & (18.63)     & --     & --       & --       & (22.25) \\
b20               & --     & 9875.31    & 2.77    & --      & --           & 12.8          & -- & --      & 11043.31 & 230.46       & (28.03)     & --     & --       & 29.71    & (1.32)  \\
b21               & --     & --         & 2.91    & --      & --           & 13.07         & -- & --      & --       & 131.73       & (30.44)     & --     & --       & 117.9    & (1.55)  \\
b22               & --     & 39234.32   & 6.35    & --      & --           & 20.3          & -- & --      & 27488.37 & 2228.81      & (58.86)     & --     & --       & 114.37   & (2.86)  \\
des\_perf         & --     & --         & 16.75   & --      & --           & 38.06         & -- & --      & --       & 94.76        & (32.62)     & --     & --       & --       & (4.28)  \\
mem\_ctrl         & --     & --         & 21.13   & --      & --           & 51.67         & -- & --      & --       & 63.23        & (8.83)      & --     & --       & --       & (--)    \\
pci\_bridge32     & --     & --         & 167.86  & --      & --           & 107           & -- & --      & --       & --           & (--)        & --     & --       & --       & (--)   \\
s15850            & --     & --         & 6.06    & --      & --           & 18.95         & -- & --      & --       & 52.43        & (0.68)      & --     & --       & 409.39   & (1.39)  \\
s35932            & 2315.5 & 987.09     & 18.08   & --      & 1536.89      & 57.91         & -- & 1495.87 & 1327.6   & 64.02        & (4.71)      & 703.03 & 1496.98  & 87.47    & (2.79)  \\
s38584            & --     & --         & 29.54   & --      & --           & --            & -- & --      & --       & 46.18        & (12.81)     & --     & --       & 58.54    & (9.91)  \\
s9234             & --     & --         & 1.13    & --      & --           & 3.7           & -- & --      & --       & 3.23         & (0.47)      & --     & --       & 33.33    & (0.13)  \\
systemcaes        & --     & --         & 7.38    & --      & --           & 148.06        & -- & --      & --       & --           & (22.18)     & --     & --       & 158.91   & (5.86)  \\
usb\_funct        & --     & --         & 77.41   & --      & --           & 172.28        & -- & --      & --       & 202.8        & (--)        & --     & --       & 3569.73  & (--)    \\
usb\_phy          & 52.35  & 26.85      & 0.2     & 2926.79 & 3073.82      & 0.59          & -- & 11.42   & 22.59    & 0.39         & (0.96)      & 194.26 & 2609.56  & 5.33     & (11.43) \\
wb\_conmax        & 815.48 & 1703.33    & 198.22  & --      & 20884.99     & 105.67        & -- & 775.02  & 1881.7   & 382.59       & (58.57)     & --     & 23900.31 & 1276.97  & (25.26) \\
\end{tabular}
\end{center}
\end{footnotesize}
\vspace*{-5mm}
\end{table*}

The results are reported in the columns whose headings do not contain `pipelined' in Table~\ref{tab:results_merge}.
Experiments denoted with a `--' are either timed out or ran out of memory.

Apparently, there are many infeasible benchmarks for rIC3. 

Our simple standalone standard IC3 is even less successful.
This is expected and in line with the results of, e.g., \cite{DBLP:conf/isvlsi/YuHNCKSCM18}. 

Our implementation \texttt{ic3-sim-F} that applies \emph{certified} most forward retiming
solves
\emph{all} benchmarks.

Our \texttt{ic3-sim} performs well on retimed \emph{and} combinationally 
resynthesized benchmarks, where rIC3 CPU-times predominantly increase. 
Additionally, even though certifying, it is not outperformed by ABC's 
\emph{non-certifying} \texttt{dprove}.
We especially refer to the non-trivial instances `\emph{usb\_funct}' and `\emph{b13}',
where our approach shows significant complementary strengths.

\paragraph{\textbf{Pipelined Designs}}\labelsec{ret_res_pipe}

We added pipelines of three FFs to each input of the instances from \reftab{results_merge}. We then applied the same retiming and resynthesis
operations like beforehand (\refsec{ret_res_std}). We refer to the columns labeled `pipelined' in Table~\ref{tab:results_merge}.

The added pipeline stages apparently hurt both rIC3 and \texttt{ic3}. Usually, pipelined and retimed
designs are a good fit for k-induction and we wanted to rule out that rIC3 
misses a lot of instances because simple path constraints are deactivated.
However, \emph{none} of the benchmarks could be solved by k-induction with simple-path
constraints (column `\textbf{rIC3-kind} simple-path') as well. 

Again, we present the results of \texttt{dprove} as a reference for retiming and resynthesis
and take note that we can solve two instances that the \emph{non-certifying} \texttt{dprove} cannot.

\paragraph{\textbf{Signal Correspondence With Virtual Forward Retiming}}\labelsec{scorr}
We showcase how our approach of creating virtually forward retimed (\refsec{virt_retiming}) signals on demand
enables significantly stronger preprocessing capabilities than ABC's widely used \texttt{scorr},
even though \texttt{scorr} \emph{uses unrolling}. 

Some results are displayed in \reftab{pipeline_scorr}. We allow an unrolling 
of the detected \emph{register depth} (\refsec{sse:sequentialsimulation}) 
$d$ of the designs (see column \emph{scorr -F d} in \reftab{pipeline_scorr})
as well as the maximum amount of frames that did not report `failed' induction 
checks due to exceeded SAT solver conflict budgets (see column \emph{scorr -F max} in 
\reftab{pipeline_scorr}).

We report on the fraction (`gain' in \%) of the found signal correspondences 
(i.e., the number of signals with another equivalent signal divided by the overall number of signals) 
on a subset of the resynthesized problems from \reftab{results_merge}.

One reason for the higher gains of our approach
might be that \texttt{scorr} switches off simple path constraints
for efficiency reasons. Example \refexample{simple_path} in \refsec{simple_path}
demonstrates that they can be necessary.

\begin{table}[tb]
\begin{footnotesize}
\begin{center}
\setlength{\tabcolsep}{2pt}
\caption{Signal correspondence . CPU time in $s$. Gain in \%.\labeltab{pipeline_scorr}}
\begin{tabular}{l|rr|rrr|rrr}
  & \multicolumn{2}{c|}{\emph{ic3-sim}} & \multicolumn{3}{c|}{\parbox{15mm}{\centering\emph{scorr -F d}}} & \multicolumn{3}{c}{\parbox{20mm}{\centering\emph{scorr -F max}}} \\
\hline
\parbox{14mm}{\textbf{Benchmark} \\ \textbf{pipelined}} & \textbf{time} & \textbf{gain} & \textbf{time}& \textit{d} & \textbf{gain} & \textbf{time} & \textit{max}& \textbf{gain}   \\
\hline
b13 & 1.22\,s & 100\,\%  & 0.1\,s &  7& 25.5\,\% & 11.06\,s & 100& 25.5\,\%   \\
b15 & 299.27\,s & 100\,\%   & 14.77\,s & 3& 6.16\,\% & 23.04\,s & 4 & 6.18\,\%  \\
s9234 & 33.33\,s & 100\,\%  & 7.3\,s& 7 & 51.26\,\%   & 8.35\,s& 20 & 51.26\,\%  \\
\end{tabular}
\end{center}
\end{footnotesize}
\vspace*{-5mm}
\end{table}

\paragraph{\textbf{Arbitrarily Strong Sequential Resynthesis}}
One important strength of our approach is its completeness. Even if no full
signal correspondence can be established, \texttt{ic3-sim} is able to strengthen
a proof of sequential equivalence with additionally learned lemmas. 
For this scenario, we increase the effort spent on finding good counterexamples for
refinement in \texttt{ic3-sim} by letting \emph{cutoff = 300}.

We deliberately chose `third party' sequential equivalence checking benchmarks of which we do not know 
anything about their transformation history. 
Firstly, in \reftab{results_hwmcc}, we demonstrate some hard examples from past Hardware Model Checking Competitions 
\cite{DBLP:journals/jsat/CabodiLPPPQVBH14} of which we know that the underlying problems are sequential miters. 

\begin{table}[tb]
\begin{footnotesize}
\begin{center}
\renewcommand{\arraystretch}{0.88}
\caption{Experimental results on HWMCC miters. CPU time in $s$.\labeltab{results_hwmcc}}
\begin{tabular}{l|rrr|rr}
\textbf{Benchmark} & \textbf{ic3} & \textbf{rIC3} & \textbf{ic3-sim} & {dprove} \textbf{(uncertified)} \\
\hline
bobsmhdlc & -- & 3098.61 & 44.13 & (--)\\
bobsmhdlc1 & -- & 2801.84 & 27.85 & (--)\\
bobsmhdlc2 & -- & 3546.76 & 42.42 & (--)\\
bobsmhdlc3 & -- & 3946.26 & 45.49 & (--)\\
bobsmcodic & -- & 7796.49 & 3175.13 & (--)\\ 
eijkbs3271 & -- & 17802.6 & 0.8 & (0.02)\\
eijkbs3384 & -- & -- & 78 & (25.57) \\
eijkbs6669 & -- & -- & 14.6 & (8.72) \\
\end{tabular}
\end{center}
\end{footnotesize}
\vspace*{-5mm}
\end{table}

The results show that we can perform well on problems that might not have complete retiming
invariants to discover. Our approach performs equally well on benchmarks
where the general purpose model checker rIC3 has difficulties and on benchmarks
where the dedicated equivalence checker \texttt{dprove} has difficulties.

Secondly, we present experiments on the 310 sequential 
equivalence checking \emph{6s22}-benchmarks from \cite{DBLP:conf/fmcad/DurejaGIV21}. This benchmark set was used to showcase how an IC3 implementation that
deduces inductive invariants over internal signals as well performs on realistic sequential equivalence checking problems.
\reftab{6s22results} shows our findings.
The IC3 implementation from \cite{DBLP:conf/fmcad/DurejaGIV21} is not openly available, we nevertheless copied the results that were reported with a timeout of 600\,$s$ in the paper into our table
and used a timeout of 600\,$s$ for our experiments as well.
Our implementation is \emph{able to solve {\bf all} benchmarks within 20\,$s$}, whereas rIC3 loses some instances
with the timeout of 600\,$s$ (wall clock time).
  
\begin{table}[tb]
\setlength{\tabcolsep}{2pt}
 \begin{footnotesize}
 \begin{center}
 \caption{Results on 6s22-SEC benchmarks from \cite{DBLP:conf/fmcad/DurejaGIV21}. \labeltab{6s22results}}
 \begin{tabular}{c|ccccc}
       &  \textbf{ic3-sim} & \textbf{IBM-IC3-INN}  \cite{DBLP:conf/fmcad/DurejaGIV21} & \textbf{ic3} & \textbf{rIC3} & \parbox{14mm}{\centering {dprove} \\ \textbf{(uncertified)}} \\
 \hline
      Solved & \textbf{310} / 310 & 278\footnotemark / 310 & 294 / 310 & 302 / 310 & (\textbf{310} / 310) \\
 \hline
 \end{tabular}
 \end{center}
 \end{footnotesize}
 \vspace*{-5mm}
\end{table}
\footnotetext{Not our measurement, taken from \cite{DBLP:conf/fmcad/DurejaGIV21}.}

\paragraph{\textbf{Certification Overhead}}\labelsec{certification}

\begin{table}[tb]
\begin{footnotesize}
\setlength{\tabcolsep}{3pt}
\centering
\caption{Certificate checking time using Certifaiger \cite{DBLP:conf/cav/FroleyksYPBH25}. \labeltab{certificates} }
\begin{tabular}{c|c|c|c|c}
   \parbox{0.8cm}{interval} & [0\,s, 1\,s) &  [1\,s, 10\,s) & [10\,s, 100\,s) & [100\,s, 1000\,s)   \\
\hline
\parbox{0.8cm}{bench\-marks}& \parbox{2cm}{b12, b13, s9234, usb\_phy, b12\_pipe, b13\_pipe, b14\_pipe, s9234\_pipe, usb\_phy\_pipe} & \parbox{2cm}{b14, b15, b17, b18, b20, b22, s15850\_pipe, s35932\_pipe, s38584\_pipe, s15850, s35932, s38584, aes\_core, mem\_ctrl, pci\_bridge32, des\_perf, wb\_conmax bobsmhdlc*, b18\_pipe, b20\_pipe, b21\_pipe, b22\_pipe, des\_perf\_pipe, mem\_ctrl\_pipe,  pci\_bridge32\_pipe} & \parbox{1.2cm}{b17\_pipe} & \parbox{2cm}{b15\_pipe, bobsmcodic} \\
\end{tabular}
\end{footnotesize}
\end{table} 

Here we report on the additional overhead that we experienced for certificate checking with Certifaiger.
We remark that certificate \emph{generation} comes naturally to our approach and only 
requires outputting the proven inductive invariant, which is negligible in
comparison to proving it beforehand with SAT solver calls.
For evaluating the certificate \emph{checking}, we considered the experiments with
 \texttt{ic3-sim-F} 
 (where the final invariant has to be transformed according to \refsec{retiming_preproc})
 and \texttt{ic3-sim}. We took the higher runtime if both 
 solved the benchmark.

The results are summarized in \reftab{certificates}.
Except for three outliers all certificates could be checked within less than 10\,s.
The highest run time was observed for the certificate for
bobsmcodic (output by \texttt{ic3-sim} with \emph{cutoff = 300}) that took us
449.5\,s to check.

\section{Conclusion and Future Work}
\labelsec{conclusion}

We presented a practical IC3-based approach for verifying sequential equivalence after retiming and complex sequential resynthesis.
By speculating invariants through (delayed) simulation, our method enables IC3 to efficiently prove equivalence for designs with retiming only, with retiming and combinational logic synthesis, and with 
retiming and strong sequential synthesis.

\paragraph*{Certified equivalence checking}
Our experiments demonstrate the need for a dedicated
sequential equivalence checker that emits
certificates. Apparently, using methods such as rIC3, although they implement techniques that
target equivalence-checking problems, is not sufficient.

\paragraph*{Non-certifying state-of-the-art}
We observe that even our prototype solution, which does not unroll the transition relation, has
complementary strengths compared to the non-certifying state of the art.
There are indeed instances on which our approach---although it is less tuned for efficiency than
ABC---performs significantly better.

Furthermore, we use virtual forward retiming only for signals that are identified as
candidates by multi-timeframe simulation. In this way, we can prove signal correspondences for which ABC's
sequential sweep techniques, which are based on $k$-induction without unique-state constraints, do not converge.

\paragraph*{Future work}
Directions for further research include moving more of the resynthesis techniques from
tools such as \texttt{dprove} into our tool and certifying them. For instance, tools typically
use passes of maximal forward retiming, minimal register retiming, and
combinational rewriting attempts interchangeably to align FFs.

The theory presented in this paper already appears to enable certification of such approaches.

\bibliographystyle{IEEEtran}
\bibliography{paper}

\end{document}